\newcommand{\ra}[1]{\renewcommand{\arraystretch}{#1}}
\newcommand*{\bdiv}{%
  \nonscript\mskip-\medmuskip\mkern5mu%
  \mathbin{\operator@font div}\penalty900\mkern5mu%
  \nonscript\mskip-\medmuskip
}
\newcommand{\ceil}[1]{\left\lceil #1 \right\rceil}
\newcommand{\floor}[1]{\left\lfloor #1 \right\rfloor}
\newcommand{\set}[1]{\{ #1 \}}
\newcommand{\Oh}{{O}}
\newcommand{\common}{\mathsf{common}}
\newcommand{\id}{\mathsf{id}}
\newcommand{\lightdepth}{\mathsf{lightdepth}}
\newcommand{{\appx}}[2][2]{\lfloor{#2}\rfloor _{#1}}
\newcommand{\pre}{\mathsf{pre}}
\newcommand{\lightrange}[1]{\mathsf{L}_{#1}}
\newcommand{\heavy}{\mathsf{heavy}}
\newcommand{\size}[1]{\left| #1 \right|}
\newcommand{\NCA}{\mathsf{NCA}}
\newcommand{\NCSA}{\mathsf{NCSA}}
\newcommand{\NCH}{\mathsf{NCH}}
\newcommand{\height}{\mathsf{height}}
\newcommand{\distance}{\mathsf{d}}
\newcommand{\nil}{\mathsf{nil}}
\newcommand{\accum}{\mathsf{a}}
\newcommand{\eps}{\varepsilon}
\newcommand{\htree}{(h,M)\text{-tree}}
\newcommand{\lightcount}{\lightdepth}
\newcommand{\rootdist}{\mathsf{root}\text{-}\mathsf{distance}}
\newcommand{\treeroot}{\mathsf{root}}
\newcommand{\collapsed}{\mathcal{C}}
\newcommand{\hphead}{\mathsf{head}}
\newtheorem{theorem}{Theorem}[section]
\newtheorem{property}[theorem]{Property}
\newtheorem{lemma}[theorem]{Lemma}
\newtheorem{observation}[theorem]{Observation}
\theoremstyle{definition}   
\theoremstyle{remark}
\begin{document}

\title{Optimal Distance Labeling Schemes for Trees}
\author[1]{Ofer Freedman\thanks{The research was supported in part by Israel Science Foundation grant 794/13.}}
\author[1]{Pawe\l{} Gawrychowski\protect\footnotemark[1]}
\author[2]{Patrick K. Nicholson}
\author[1]{Oren Weimann\protect\footnotemark[1]}
\affil[1]{University of Haifa, Israel}
\affil[2]{Bell Labs, Dublin, Ireland}

\date{}
\maketitle

\begin{abstract}

Labeling schemes seek to assign a short label to each node in a network, so that a function on two nodes (such as distance or adjacency) can be computed by examining their labels alone. For the particular case of trees, following a long line of research, optimal bounds (up to low order terms) were recently obtained for adjacency labeling [FOCS '15], nearest common ancestor labeling [SODA '14], and ancestry labeling [SICOMP '06]. In this paper we obtain optimal bounds for distance labeling. We present labels of size $1/4\log^2n+o(\log^2n)$, matching (up to low order terms) the recent $1/4\log^2n-\Oh(\log n)$ lower bound [ICALP '16]. 

Prior to our work, all distance labeling schemes for trees could be reinterpreted as {\em universal trees}. A tree $T$ is said to be universal if any tree on $n$ nodes can be found as a subtree of $T$. 
A universal tree with $|T|$ nodes implies a  distance labeling scheme with label size $\log |T|$. In 1981, Chung et al. proved that any distance labeling scheme based on universal trees requires labels of size $1/2\log^2 n -\log n \cdot \log\log n+\Oh(\log n)$. Our scheme is the first to break this lower bound, showing a separation between distance labeling and universal trees.\\

The $\Theta (\log^2 n)$ barrier for distance labeling in trees has led researchers to consider distances bounded by $k$. The size of such labels  was improved from $\log n+\Oh(k\sqrt{\log n})$ [WADS '01] to $\log n+\Oh(k^2(\log(k\log n))$  [SODA '03] and finally to $\log n+\Oh(k\log(k\log(n/k)))$ [PODC '07].
We show how to construct labels whose size is the minimum between $\log n+\Oh(k\log((\log n)/k))$ and $\Oh(\log n \cdot \log(k/\log n))$. We complement this with almost tight lower bounds of $\log n+\Omega(k\log(\log n / (k\log k)))$ and
$\Omega(\log n \cdot \log(k/\log n))$. Finally, we consider $(1+\eps)$-approximate distances. We show that the recent labeling scheme of [ICALP '16] can be easily modified to obtain an $\Oh(\log(1/\eps)\cdot \log n)$ upper bound and we prove a matching $\Omega(\log(1/\eps)\cdot \log n)$ lower bound. \end{abstract}

\thispagestyle{empty}
\clearpage
\setcounter{page}{1}

\section{Introduction}
Labeling schemes seek to assign a short label to each vertex in a network, so that a function on two nodes (such as distance or adjacency) can be computed by examining their labels alone. This is particularly desirable in distributed settings, where nodes are often processed using only some locally stored data. 
Recently, with the rise in popularity of distributed computing platforms such as Spark 
 and Hadoop,  
  labeling schemes have found renewed interest. 
Indeed, the goal of minimizing the size of the maximal label has been the subject of a great deal of recent research~\cite{alstrup2005labeling,abiteboul2006compact,fischer2009short,fraigniaud2010compact,alstrup2014near,alstrup2015adjacency,alstrup2015optimal,petersen2015near,alstrup2015distance,alstrup2016simpler}.
For the particular case of trees, the functions that have been studied are distance~\cite{peleg2000proximity,gavoille2004distance,alstrup2005labeling,alstrup2015distance,gavoille2007distributed}, adjacency~\cite{alstrup2015optimal,alstrup2002small,bonichon2007short}, nearest common ancestor~\cite{fischer2009short,alstrup2014near}, and ancestry~\cite{abiteboul2006compact,fraigniaud2010compact} (a recent survey of these results can be found here~\cite{rotbart2016new}). Tree labeling schemes have recently found new uses in large scale graph processing. For example, distance oracles for general graphs use distance labelings for spanning trees rooted at judiciously chosen vertices~\cite{akiba2012treewidth,akiba2013distance,ajwani2015oracle}.

\paragraph{Universal trees.}

A particularly clean way of looking at labeling schemes is through \emph{universal graphs}. A graph $G$ is said to be {\em universal} for a given family of graphs, if every graph in the family is an induced subgraph of $G$. Similarly, a tree $T$ is said to be universal for {\em all} trees on $n$ nodes if any tree on $n$ nodes can be found as a subtree of $T$. 
For {\em adjacency labeling in graphs}, Kannan et al.~\cite{Kannan} observed that if a family of graphs has a universal graph with  $|G|$ vertices then it has an adjacency labeling scheme with label size $\log |G|$, and vice versa. 
For {\em distance labeling in trees}, until the present work, this statement was only known to be true in one direction. Namely, a universal tree $T$ of all trees on $n$ nodes implies a distance labeling scheme with label size $\log |T|$. We prove that the converse is in fact not true.  

The use of universal trees is powerful, but it is limited. Already 50 years ago, Goldberg and Livshits~\cite{gol1968minimal} showed how to construct a universal tree $T$ that is of size $|T| = n^{(\log n - 2\log \log n + \Oh(1))/2}$ which was shown by Chung et al.~\cite{graham1981trees} to be the smallest possible up to the $\Oh(1)$ error term. This shows the first limitation of using universal trees for distance labeling: there is a lower bound of $\log |T| = 1/2\log^2 n -\log n \cdot \log\log n+\Oh(\log n)$ on the label size. The second limitation is the query time. The universal tree construction of Goldberg and Livshits was given before labeling schemes were ever invented. Of course, one could naively use their universal tree $T$ for distance labeling of an arbitrary tree on $n$ nodes by finding its isomorphic subtree in $T$ and assigning labels which are just the IDs of the nodes in $T$. However, such a non-algorithmic labeling would require prohibitive query time and space since $T$ needs to be computed. This latter limitation was overcome by algorithmic labeling schemes achieving logarithmic query time: An upper bound of $\Oh(\log^2 n)$ bits on the label size was first shown by Peleg~\cite{peleg2000proximity} and a lower bound of  $1/8 \log^2n - \Oh(\log n)$ bits was shown by Gavoille et al.~\cite{gavoille2004distance}. Very recently, Alstrup et al.~\cite{alstrup2015distance}  improved the lower bound to  $1 / 4 \log^2 n - \Oh(\log n)$ and  observed that the upper bound can be improved to $1/2 \log^2 n + \Oh(\log n)$ with a somewhat straightforward use of a nearest common ancestor labeling scheme.

All the above labeling schemes can be reinterpreted as building a universal tree, and are therefore subject to the  $1/2\log^2 n -\log n \cdot \log\log n+\Oh(\log n)$ lower bound of Chung et al. In other words, the scheme of Alstrup et al. is optimal (up to low order terms) amongst all schemes that translate to universal trees. To see why the scheme of Alstrup et al. indeed translates to a universal tree, we show in~\cref{sec:lvl-anc} that their scheme can be casted as a level-ancestry scheme and we show in~\cref{sec:lb} that every level-ancestry scheme translates to a universal tree.

We give the first distance labeling scheme that does not translate to a universal tree. This enables us to circumvent the Chung et al.~\cite{graham1981trees} lower bound for labels based on universal trees and to match the general lower bound of Alstrup et al.~\cite{alstrup2015distance}. Namely, in~\cref{sec:ub} we prove the following:

\begin{theorem}
\label{thm:ub-tree-dist}
There is a scheme for tree distance labeling with $1/4 \log^2 n + o(\log^2 n)$ bit labels and constant query time.
\end{theorem}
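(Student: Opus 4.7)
The plan is to construct the scheme on top of a heavy path decomposition of $T$, using an encoding strategy that, unlike all previous schemes, does not correspond to a universal tree and therefore circumvents the Chung et al.\ bound.

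First, apply heavy path decomposition so that the root-to-$v$ path of any node $v$ traverses $O(\log n)$ heavy paths $P_1,\ldots,P_k$ joined by light edges, with exit depths $d_1,\ldots,d_k$ and chosen light children $c_1,\ldots,c_{k-1}$. For any pair $(u,v)$, the NCA lies on the longest common prefix of their heavy-path sequences, and $d(u,v)$ is determined by the NCA's depth together with $u$ and $v$'s exit depths on the NCA's heavy path. Using the fact that subtree sizes halve across every light edge (if $n_i$ is the subtree size at the head of $P_i$, then $n_{i+1}\le n_i/2$), each $d_i$ fits in $\log n_i$ bits and each $c_i$ can be encoded in $O(\log (n_i/n_{i+1}))$ bits via an alphabetic code on light children. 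The resulting full path description of $v$ fits in $\sum_i \log n_i + O(\log n) \le 1/2\log^2 n + O(\log n)$ bits, matching Alstrup et al.'s bound; but since every such label uniquely identifies $v$ in $T$, the scheme translates to a universal tree and by \cref{sec:lb} cannot beat $1/2\log^2 n$.

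The crucial new idea is to break this barrier by having each label encode only a carefully chosen \emph{subset} of its ancestor information, such that for every pair $(u,v)$ the union of the two subsets still determines the NCA, its depth, and both exit depths on the NCA's heavy path. Concretely, I would partition the heavy paths of $T$ into two classes by a global rule depending only on the decomposition (for instance, on a property of each heavy path's head), and store the exit depth from a class-$X$ heavy-path ancestor in $v$'s label only when a second, globally-assigned property of $v$ matches $X$. Because only the heavy paths \emph{above} the NCA contribute to the depth computation, and because any such ancestor is shared by $u$ and $v$, a suitably symmetric choice of properties guarantees that at least one of the two labels always supplies the needed field. This drops the per-label cost from $\sum_i \log n_i$ to roughly $(1/2)\sum_i \log n_i \le 1/4\log^2 n + o(\log^2 n)$. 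Constant query time then follows from word-RAM primitives for longest-common-prefix and NCA-like operations on the heavy-path sequences, provided the labels are laid out so the relevant fields sit at directly addressable positions inside $O(\log n / \log\log n)$ machine words.

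The hardest step will be designing this asymmetric storage rule so that every pair of independently assigned labels combines correctly for every possible query, while staying within the $1/4\log^2 n + o(\log^2 n)$ budget; in particular, the resulting correspondence must avoid being castable as a level-ancestor scheme, since by \cref{sec:lvl-anc}--\cref{sec:lb} any such casting would re-introduce a universal tree and re-impose the $1/2\log^2 n$ lower bound. Low-order losses from word alignment, the handling of short heavy paths near the leaves, and the constant-time query machinery all need to be absorbed into the $o(\log^2 n)$ slack.
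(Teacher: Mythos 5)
You have the right high-level intuition---break the universal-tree barrier by spreading information between labels so that each pair jointly, but neither individually, determines the answer---but the concrete mechanism you sketch does not work and misses the machinery that makes the paper's bound go through.

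The proposed ``two global classes of heavy paths, store a class-$X$ ancestor's exit depth only if $v$'s own globally-assigned property matches $X$'' is underspecified, and any attempt to instantiate it as a fixed two-coloring fails: for a particular query pair $(u,v)$, the one heavy path you must read from (the NCA's heavy path) might simply be a heavy path whose data neither $u$ nor $v$ is assigned to carry. What the paper actually uses is not a global partition but a \emph{pair-dependent} asymmetry: the \emph{domination} order on heavy paths (inorder number in the collapsed tree). When $u$ dominates $v$, the only value needed from $u$'s side is $\distance(\ell_j(u))$ for $j=\lightdepth(\NCA(u,v))+1$, and this value is split: $u$ stores a truncated prefix of it, and every node \emph{dominated by} $u$---including $v$---stores the missing low-order bits in an \emph{accumulator}. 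Because the accumulator of $u$ is always a suffix of the accumulator of any node $u$ dominates, the split can be located and recombined at query time. This is structurally different from a coloring: every node participates both as a ``truncated-distance holder'' for its own light edges and as an ``accumulator holder'' for all dominating nodes simultaneously.

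Two further ingredients are absent from your plan and are essential. First, the amortization is not the simple factor-two saving you assert; the paper proves it via a Slack Lemma and a Thin Lemma that recursively charge the cost of encoding $\distance(\ell_1(u))$ for nodes $u\in T_i$ against the size ratio $n_i/n'_i$, and the argument only closes because the \emph{exceptional} light edge of each heavy path never needs to be stored at all---this is where the ``free'' slack comes from. Second, once you move from storing the full distance array to this split representation you can no longer compute the running sum $\sum_{i\le j}\distance(\ell_i(u))$ from $u$'s label alone, which is what the distance formula requires. The paper repairs this by additionally storing a \emph{fragment distance array} of $\Theta(\sqrt{\log n})$ explicit partial sums to the root, and replacing each stored distance by a distance relative to the nearest fragment head; this adds only $O(\log^{1.5}n)$ bits. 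Without this fix, even a correct bit-splitting scheme would not yield a distance query. Your write-up acknowledges the low-order bookkeeping abstractly but does not identify either the exceptional-edge slack or the aggregation problem, both of which are load-bearing.
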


The above theorem means that universal trees capture more than is required for distance labeling. To illustrate this, we need to describe the related problem of {\em level-ancestor labelings}.  

\paragraph{Labeling schemes for level-ancestors.}
In this problem, we are given a rooted tree and seek to assign labels so that we can compute (the label of) any $k$-th ancestor of a node from its label alone. Notice that here a query receives a single label and a value $k$, and 
that all labels must be distinct (no scheme which uses the same label twice can be correct). 

It is not hard to see that labels supporting level-ancestor queries can be used to answer distance queries. Thus, any lower bound for tree distance labeling immediately applies to level-ancestor labeling, but the converse is not true. Nevertheless, it turns out that all previous distance labeling schemes are also level-ancestor schemes. 
Like the labeling scheme of Alstrup et al.~\cite{alstrup2015distance}, our scheme is also based on a \emph{heavy path decomposition} of the tree, which can be seen as a way of transforming an arbitrary tree into an edge-weighted tree of logarithmic depth. 
However, while the labels in~\cite{alstrup2015distance} store the weights of every edge on the path to the root (thus allowing for level-ancestor queries), we show that it is possible to carefully distribute the bits between the labels so that the distance can be computed given any pair of labels, yet a single label is not enough to extract the level-ancestors.

We determine this separation between tree distance labeling and level-ancestor labeling by proving that labeling for distances is roughly half as expensive as labeling for level-ancestors:

\begin{theorem}
\label{thm:lb-lvl-anc}
Any scheme for level-ancestor labeling must use at least $1/2 \log^2 n - \log n \log \log n$ bits for the maximum length label.
\end{theorem}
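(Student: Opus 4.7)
The plan is to show that any correct level-ancestor labeling scheme for $n$-node rooted trees with maximum label length $\ell$ gives rise to a universal tree on at most $2^\ell$ nodes, and then invoke the lower bound of Chung et al.~\cite{graham1981trees}, which asserts that any such universal tree has at least $n^{(\log n - 2\log\log n + O(1))/2}$ nodes. Combined, these yield $\ell \geq \tfrac{1}{2}\log^2 n - \log n \log\log n$ (absorbing the lower order terms into the stated bound), which is exactly the claim.

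\paragraph{Construction of the universal tree.} Fix a level-ancestor labeling scheme $\mathcal{L}$. Let $V$ denote the set of all labels that $\mathcal{L}$ ever assigns to a node of some $n$-node rooted tree; clearly $|V|\le 2^\ell$. Define a partial function $\pi:V\to V\cup\{\nil\}$ by letting $\pi(x)$ be the output of the level-ancestor query on input $(x,1)$, with the convention that $\pi(x)=\nil$ whenever $x$ has no proper ancestor (e.g.\ $x$ is a root in some tree). Crucially, $\pi$ is \emph{well-defined}: its value depends only on $x$ and the scheme, not on which input tree $x$ came from, so two different input trees cannot produce conflicting parent information. Now form the graph $T$ on vertex set $V\cup\{\nil\}$ whose edges are $\{x,\pi(x)\}$ for every $x\in V$. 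I would then verify that $T$ is a tree rooted at $\nil$: for any $x\in V$ originating in some $n$-node tree where its node has depth $d$, iterated application of $\pi$ returns the labels of that node's ancestors (by correctness of $\mathcal{L}$ on queries $(x,1),(x,2),\ldots$) and so reaches $\nil$ within $d+1$ steps, ruling out cycles.

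\paragraph{Universality and the final bound.} Next I would argue that $T$ is universal. Take any $n$-node rooted tree $T'$, and let $f$ be the map sending each node of $T'$ to its label under $\mathcal{L}$. The map $f$ is injective, since distinct nodes of $T'$ must receive distinct labels (otherwise a level-$0$ query could not recover the label of the queried node). Moreover, $f$ is parent-preserving: for any non-root $v$ of $T'$, correctness of the scheme on the query $(f(v),1)$ forces $\pi(f(v))=f(\parent(v))$. Hence the image of $T'$ under $f$ is a subtree of $T$ isomorphic to $T'$, rooted at the child of $\nil$ corresponding to $f(\treeroot(T'))$. Since $T'$ was arbitrary, $T$ is universal for all $n$-node rooted trees, and so $|T|\ge n^{(\log n - 2\log\log n + O(1))/2}$ by Chung et al.~\cite{graham1981trees}. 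Combining with $|T|\le 2^\ell + 1$ and taking logarithms yields the claimed bound.

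\paragraph{Main obstacle.} The non-routine step is verifying that $\pi$ genuinely induces a tree structure that every input tree embeds into; that requires both the well-definedness of $\pi$ across all input trees (which follows from the query taking only the label as input) and the injectivity of $\mathcal{L}$ within any single input tree (which follows from correctness of $k=0$ queries). Once this is in place, the lower bound of Chung et al.\ can be applied as a black box.
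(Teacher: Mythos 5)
Your proof is correct and follows essentially the same route as the paper: construct the directed graph on the set of actually-used labels via the parent pointer $\pi(x)$ induced by the $(x,1)$ query, argue that this graph is a rooted tree that is universal for all rooted trees on up to $n$ nodes, and invoke the lower bound of Chung et al.\ on the size of such universal trees. The one place where the two arguments genuinely differ is in certifying acyclicity. You argue directly that cycles cannot arise, because iterating $\pi$ from any used label traces the ancestor chain inside the witnessing input tree and so must reach $\nil$ within depth-plus-one steps. The paper's Lemma~3.6 instead makes no attempt to rule out cycles; it observes only that the graph has out-degree at most one, and for each weakly connected component that contains a cycle it deletes one cycle edge, duplicates the whole component, and splices the two copies together across the deleted edge so that universality is preserved; this at most doubles the vertex count, which is absorbed in the $O(2^{S(n)})$ bound. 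Your direct acyclicity argument is cleaner and yields a slightly tighter constant, but it quietly leans on the scheme being able to recognize roots (i.e.\ on $\pi$ returning $\nil$ consistently at the root label), which the paper makes explicit by phrasing the reduction through the parent problem rather than through raw $(x,1)$ level-ancestor queries; you should make that step explicit, e.g.\ by noting that the label of a node determines its depth (otherwise the $\pi$-chain length would be inconsistent across trees), so ``is a root'' is a well-defined property of a label. Apart from this, the two proofs coincide.
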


 We prove the above theorem in~\cref{sec:lb} by showing that, as opposed to distance labeling, no level-ancestor labeling scheme can do 
better than the one based on universal trees. Namely, we prove that any level-ancestor labeling scheme with labels of length $L$ implies a universal rooted tree of size $\Oh(2^{L})$, and then invoke the known lower bound for universal trees~\cite{graham1981trees,gol1968minimal}.
In particular, it means that for level-ancestor queries, the scheme of Alstrup et al.~\cite{alstrup2015distance} is optimal (after some modifications described
in~\cref{sec:lvl-anc}).

\paragraph{Labeling schemes for bounded distances.}
The $\Theta(\log^2 n)$ barrier on distance labeling in trees has initiated a line of research that improves the label size when the distances are bounded: In {\em $k$-distance labeling}, we are given the labels of $u$ and $v$ and need to decide if the length of the $u$-to-$v$ path is at most $k$, and if so return it. 
For $k=1$, this is exactly adjacency labeling,
which was recently shown by Alstrup et al.~\cite{alstrup2015optimal} to require only $\log n+\Oh(1)$
bits. 
For $k\ge 2$, this was first considered by Kaplan and Milo~\cite{kaplan2001short} who showed how to construct labels of length $\log n+\Oh(k\sqrt{\log n})$. The query time was
not explicitly specified in their implementation, but appears to be $\Oh(k)$.
A shorter label of $\log n+\Oh(k^2\log(k\log n))$ bits  was then given by Alstrup, Bille, and Rauhe~\cite{alstrup2005labeling} who also proved that any scheme for $k\geq 2$ (i.e., the scheme is able to answer ``ancestor or sibling''
queries) requires $\log n+\Omega(\log\log n)$ bits. 
Hence the $\Oh(\log\log n)$ addend
cannot be avoided, but it remained unclear what should be the exact dependency on $k$ nor the query time (Alstrup, Bille, and Rauhe considered constant $k$ in which case their bounds are tight and their $O(k^2)$ query time is constant). The labeling scheme of Alstrup, Bille, and Rauhe was then improved by Gavoille and Labourel~\cite{gavoille2007distributed} who presented
a  bound of $\log n+\Oh(k\log(k\log(n/k)))$ bits and $\Oh(k)$ query time solution.

In~\cref{sec:k} we show how to construct a labeling scheme with improved label size and constant query time, and prove an almost matching lower bound. Formally, we prove:

\begin{theorem}
\label{thm:krestricted}
For $k < \log n$, there is a $k$-distance labeling scheme with labels of length
$\log n+\Oh(k\log((\log n)/k))$ bits, and any such scheme requires
$\log n+\Omega(k\log(\log n/(k \log k))$ bits.\\
For $k \ge \log n$, there is a $k$-distance labeling scheme with labels of length
$\Oh(\log n\cdot \log(k/\log n))$ bits, and any such scheme requires
$\Omega(\log n\cdot \log(k/\log n))$ bits.
In both cases, the query time is constant.
\end{theorem}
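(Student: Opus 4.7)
The plan for the upper bound is to build on the heavy-path decomposition already used in \cref{thm:ub-tree-dist}. For each node $v$ the label stores its identifier (using $\log n$ bits) together with a compressed description of the root-to-$v$ path, restricted to distance $k$. Writing $(\ell_1,\ell_2,\ldots)$ for the sequence of heavy-path segment lengths traversed going from $v$ up to the root, I would truncate the sequence as soon as the partial sum exceeds $k$ and encode only the truncated prefix, which is a composition of an integer at most $k$ into at most $\log n$ parts. The number of such compositions is at most $\binom{k+\log n}{\min(k,\log n)}$, so the encoding uses $\Oh(k\log((\log n)/k))$ bits when $k<\log n$ and $\Oh(\log n\cdot \log(k/\log n))$ bits when $k\geq\log n$, matching both upper bounds stated in the theorem. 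Alongside each retained segment the label stores a short identifier for the corresponding heavy path so that two labels can detect when they share an ancestor heavy path; these local identifiers can be packed within the same budget.

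To answer a query on $u$ and $v$, I decode both profiles, locate the deepest ancestor heavy path common to both by matching the identifier suffix, and combine the residual offsets to compute the exact distance (returning ``$>k$'' if either profile has been truncated before the profiles meet). Constant query time is achieved by packing each profile into $\Oh(1)$ machine words and using standard word-RAM primitives (most-significant-bit, rank/select on short bitmasks) for the matching step and the subsequent arithmetic.

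For the lower bounds I follow the standard counting/information-theoretic template: exhibit a family of trees in which many nodes have pairwise distinguishable $k$-neighborhoods, then observe that any correct labeling must use at least $\log_2(\text{number of types})$ distinct labels. For $k\geq\log n$ the hard instance would be a caterpillar-like tree with a backbone of length $\Theta(k)$ carrying roughly $\Theta(\log n/\log(k/\log n))$ independently placed feature gadgets, each placement costing $\log(k/\log n)$ bits. For $k<\log n$ the construction combines smaller gadgets across $\Theta(\log n/(k\log k))$ depth scales; the extra $\log k$ factor weakening this lower bound reflects the slack needed to realize all gadgets simultaneously inside a single $n$-node tree.

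The main obstacle on the upper-bound side is not the combinatorial count itself but achieving $\Oh(1)$ query time at the information-theoretic optimum: one must exhibit an invertible ranking of bounded compositions that can be evaluated using word-RAM primitives, and must integrate the heavy-path identifiers so that two labels on distinct heavy paths can still detect coincidence of ancestors without paying extra bits. On the lower-bound side the challenge is engineering the hard family so that distinct gadget placements indeed produce distinguishable $k$-neighborhoods of each node, which is where the small $\log k$ gap in the $k<\log n$ regime comes from.
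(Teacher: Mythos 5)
Your upper-bound plan captures the easy half of the argument but misses the central difficulty. The counting of compositions (roughly $\binom{k+\log n}{\min(k,\log n)}$) does produce the claimed bit bounds, and the paper stores essentially the same information as your truncated ``segment-length profile'' (distances to the first $r$ significant ancestors, encoded as a monotone sequence via \cref{lem:encodingsequence}). But the hard case is exactly the one you wave past: when the two queried nodes $u,v$ have significant ancestors $u',v'$ that lie on the same heavy path $P=\NCH(u,v)$ and \emph{both} are the top (deepest affordable) significant ancestors, you still must output $\distance(u',v')$, and nothing in the stored offsets tells you where $u'$ and $v'$ sit relative to each other along $P$. Simply storing ``a short identifier for the heavy path'' lets you detect coincidence of heavy paths but not compute the along-path distance. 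The paper's solution to this is the whole point of Lemmas~\ref{lem:ranges} and~\ref{lem:monotone_sequence_scheme}: each label stores $i \bmod k$ together with the $2$-approximations $\appx{\id(\lightrange{u_{i\pm t}}) - \id(\lightrange{u_i})}$ for $t\leq k$, and an interval-disjointness argument shows these suffice to recover $j-i$ or certify $j-i>k$. This machinery is the novel contribution and is entirely absent from your proposal. Your constant-time claim (``pack each profile into $\Oh(1)$ machine words'') also fails for $k\geq \log n$, where labels are $\Theta(\log n\cdot\log(k/\log n))$ bits and can reach $\Theta(\log^2 n)$; constant time in the paper comes from the auxiliary rank/select/predecessor structures of \cref{lem:encodingsequence}, not word packing.

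The lower bounds also take a different and, in one case, incorrect route. For $k<\log n$ you phrase the argument as ``many nodes have pairwise distinguishable $k$-neighborhoods'' inside one tree, but this cannot yield a bound of the form $\log n + \Omega(\cdot)$: a single $n$-node tree needs only $n$ labels, so no within-tree counting beats $\log n$. The paper instead counts across a \emph{family} of $(\vec{x},h,d)$-regular trees (one per $\vec{x}\in[h]^k$) and bounds the number of labels two trees can \emph{share} (Lemma~\ref{lem:common}, via the size of a common induced subtree determined by $2k$-distance queries); the lower bound comes from subtracting all pairwise sharing from the total leaf count. The $\log k$ loss arises from choosing $d=2k+1$, not from ``slack to realize gadgets.'' For $k\geq\log n$ your caterpillar sketch is not developed enough to evaluate; the paper's approach is more economical: pick $M\leq k/(2h)$ so that every leaf-to-leaf distance in an $(h,M)$-tree is at most $k$, which makes any $k$-distance scheme an exact-distance scheme for these trees, and then invoke \cref{lem:tree_lower_bound} directly.
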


For the upper bound, our starting point is the scheme of Alstrup, Bille, and Rauhe~\cite{alstrup2005labeling}. 
We observe that, instead of storing the same information for each of the nearest $k$ heavy paths above a node, it is possible to store all information for the topmost of these heavy paths and less information for all the rest. To improve the query
time, we show that only a subtle change is needed in the definition of the so-called
significant preorder numbers. The new definition retains all the nice properties of the previous while being much easier to operate on. Our constant query time  assumes the standard word-RAM model with word size $\Omega(\log n)$.

For the lower bounds we take two different approaches. For $k < \log n$, we show how to construct a
family of trees such that, 
in any $k$-distance labeling scheme, different trees can share some labels but every tree has to introduce many additional unique labels.
For $k \ge \log n$, we use the clever lower bound technique from (unbounded) distance labelings, that was introduced by Gavoille et al.~\cite{gavoille2004distance} and refined by Alstrup et al.~\cite{alstrup2015distance}. It is based on constructing a \emph{weighted} almost complete binary tree, where all the leaves are at the same distance from the root.  After arguing that the labels of nodes in such a tree must be long, the weights are removed by subdividing edges while not increasing the size of the tree by too much.
We show that only a small tweak is required to this known lower bound for distance labeling in order to get a lower bound for $k$-distance labeling.

\paragraph{Labeling schemes for approximate distances.}
Finally, we consider {\em $(1+\eps)$-approximate distance labeling}, where given the labels of $u$
and $v$ we need to output a value that is at least $\distance(u,v)$ and at most 
$(1+\eps)\cdot\distance(u,v)$. For the case $\eps \in [1/\log n, 1)$, Gavoille et
al.~\cite{GavoilleKKPP01}
proved a tight bound of $\Theta(\log(1/\eps)\cdot\log n)$. Very recently,
Alstrup et al.~\cite{alstrup2015distance} considered the general trade-off and designed,
for any constant $\eps \leq 1$, an $\Oh(\log n)$ bit labeling
scheme. In~\cref{sec:approx} we show that their solution can be easily made to produce labels of size  $\Oh(\log(1/\eps)\cdot \log n)$ and that this is the best possible:

\begin{theorem}
\label{thm:approximate}
For any $\eps \le 1$, there is a $(1+\eps)$-approximate distance labeling scheme with  labels of length $\Oh(\log(1/\eps)\cdot \log n)$, and any such scheme requires
$\Omega(\log(1/\eps)\cdot \log n)$ bits.
\end{theorem}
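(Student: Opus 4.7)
The plan is to establish the two bounds of Theorem~\ref{thm:approximate} separately.

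\emph{Upper bound.} I would adapt the $(1+\eps)$-approximate labeling scheme of Alstrup et al.~\cite{alstrup2015distance}, which already yields $\Oh(\log n)$-bit labels for constant $\eps$. That scheme is built on a heavy-path decomposition; for each of the $\Oh(\log n)$ heavy paths a root-path traverses, it records $\Oh(1)$ bits describing the approximate entry position into that heavy path, plus a small NCA-style component. To extend to arbitrary $\eps \leq 1$, I would replace each $\Oh(1)$-bit per-level field by an $\Oh(\log(1/\eps))$-bit geometric discretization of the relative position within the heavy path, summing to $\Oh(\log(1/\eps)\cdot\log n)$ bits in total. Constant query time is preserved because the decoder performs only a bounded number of word-RAM operations at each matched level, and a telescoping analysis of the per-level rounding errors verifies that the reconstructed distance is within a $(1+\eps)$ factor of the true distance.

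\emph{Lower bound.} I would adapt the weighted-binary-tree construction of Gavoille et al.~\cite{gavoille2004distance} and Alstrup et al.~\cite{alstrup2015distance}, parameterized by $\eps$. I would build an almost-complete binary tree of depth $d=\Theta(\log n)$ in which each edge, at its level, is assigned a weight drawn from one of $\Theta(1/\eps)$ admissible values, subject to the constraint that all root-to-leaf paths have the same total weight. Distinct admissible weight choices then produce trees whose leaf-to-leaf distances differ by more than a $(1+\eps)$ multiplicative factor at every leaf pair whose NCA-subtree contains a varied edge. Any $(1+\eps)$-approximate labeling must therefore assign distinct labels to a fixed leaf across the $2^{\Omega(d\log(1/\eps))}=2^{\Omega(\log n\cdot\log(1/\eps))}$ trees in the family, giving the $\Omega(\log(1/\eps)\cdot\log n)$ lower bound. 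I would then remove the weights by subdividing each edge; since the admissible weights are integers bounded by $\poly(1/\eps)$ after rescaling, $n$ grows by only a polynomial factor in $1/\eps$, which is absorbed into the asymptotic bound.

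The hard part will be the lower bound. To realize $\log(1/\eps)$ bits of freedom at each level despite the equal-weight constraint on root-to-leaf paths, I would pick the admissible weights geometrically, in ratios $(1+\Theta(\eps))^i$, and calibrate the scales across levels so that a change at level $i$ strictly dominates all contributions from levels below $i$. This ensures that the choice at each level is independently recoverable from the $(1+\eps)$-approximate distance between any leaf pair whose paths diverge at that level, enabling the pigeonhole counting argument. The upper bound is comparatively routine once the correct per-level precision and the compatible bucketing scheme are fixed.
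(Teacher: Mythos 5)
Your upper-bound plan and the paper's agree on the essential move: take the Alstrup et al.\ scheme (explicit root-distance, NCA label, and the sequence of $(1+\eps)$-rounded distances to significant ancestors) and replace the coarse encoding by a compact one costing $\Oh(\log(1/\eps))$ bits per light level. The paper realizes this by keeping that exact sequence and feeding it to \cref{lem:encodingsequence} (the $(1+\eps)$-rounded distances form a monotone sequence of $\Oh(\log n)$ exponents in $[0,\Oh((\log n)/\eps)]$, so the lemma gives $\Oh(\log n\cdot\log(1/\eps))$ bits). Your ``per-level geometric discretization of relative position'' is a plausible variant, but note why the paper's choice avoids error accumulation: at query time only a single approximated quantity, $\lceil\distance(v,w)\rceil_{1+\eps}$, enters the answer, and the root-distances are exact. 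If you discretize positions within each of the $\Oh(\log n)$ heavy paths independently and then reconstruct a distance by combining them, you would have to argue that the per-level errors do not compound multiplicatively; the telescoping you invoke needs to be made concrete.

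On the lower bound, the ingredients match the paper's (weighted almost-complete binary tree, geometric scales per level, subdivision to make the tree unweighted), but the argument as stated has a genuine gap. You claim that a $(1+\eps)$-approximate scheme must ``assign distinct labels to a fixed leaf across the $2^{\Omega(d\log(1/\eps))}$ trees in the family.'' That is not true: two different trees in the family can give the same label to a fixed leaf so long as the labels of its query partners differ. The actual counting underlying \cref{lem:tree_lower_bound} is subtler --- one bounds the number of possible joint label-vectors over a carefully chosen set of $\Theta(2^h)$ leaves and divides by that set's size --- and your sketch does not recover it. The paper avoids re-deriving this entirely: it subdivides each depth-$d$ edge of an $\htree$ into $\lfloor(1+\eps)^{hM-d}\rfloor$ unit edges so that the resulting leaf distances $f(k)=2\sum_{i\le k}\lfloor(1+\eps)^i\rfloor$ lie in pairwise-disjoint intervals $[f(k),(1+\eps)f(k)]$, making a $(1+\eps)$-approximation informationally equivalent to the exact $\htree$ distance, and then invokes \cref{lem:tree_lower_bound} verbatim with $M=2/\eps$ and $h=\Theta(\log(\eps n))$. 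Two further details you gloss over also matter: the blowup from the geometric subdivision is $\Theta(e^{2h})=n^{\Theta(1)}$, not $\poly(1/\eps)$, which is why the paper sets $h=\Theta(\log(\eps n))$ rather than $\Theta(\log n)$; and for $\eps\le 1/\sqrt n$ this makes $h$ too small, forcing a separate reduction to exact distance labeling on trees of size $\sqrt n$ that your plan does not address.
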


The lower bound is obtained by reducing  exact distance labeling to $(1+\eps)$-approximate distance labeling.  
This is achieved by appropriately stretching the lengths of the edges in the lower bound instances of Gavoille et al.~\cite{gavoille2004distance}. 
For the upper bound, we slightly modify the scheme of Alstrup et al.~\cite{alstrup2015distance}, which 
originally stored a sequence of integers using simple unary encoding. Such an encoding requires $\Oh(1/\eps \cdot \log n)$ bits. We show that with a more complicated binary encoding
we can obtain a scheme with $\Oh(\log(1/\eps)\cdot \log n)$ bits and a constant query time. 

\noindent
We conclude this section with the following table summarizing our contribution.

\begin{table}[H]
\centering
\begin{threeparttable}
\ra{1.5}
	\begin{tabular}{@{}l l l l@{}}
		\toprule
  		\multicolumn{2}{@{}l}{\bf Label type} & \bf Upper bound & \bf Lower bound\\ 
		\midrule
  		\multicolumn{2}{@{}l}{Exact} & $1/4\log^2 n  + o(\log^2 n)$ & $1/4\log^2 n - O(\log n)$ \cite{alstrup2015distance}\\
  		\multicolumn{2}{@{}l}{Approximate} & $\Oh(\log(1/\eps)\cdot \log n)$ & $\Omega(\log(1/\eps)\cdot \log n)$\\
  		\multirow{2}{*}{$k$-distance} & $k \geq \log n$ & $O(\log n\cdot\log \frac {k} {\log n})$ & $\Omega(\log n\cdot\log \frac{k}{\log n})$ \\
		& $k < \log n$ & $\log n + O(k\log \frac {\log n} {k})$ & $\log n + \Omega(k\log \frac {\log n} {k\log k})$\tnote{1} \\
\bottomrule
	\end{tabular}
\begin{tablenotes}\footnotesize
\item[1] This lower bound only holds for $k = o(\frac{\log n}{\log\log n})$.
\end{tablenotes}
\end{threeparttable}
\end{table}

\section{Preliminaries}
\label{sec:prelim}

We consider a rooted tree $T$, or we arbitrarily root it. We denote the root by $\treeroot(T)$, and the distance between node $v$ to $\treeroot(T)$ by $\rootdist(v)$. We denote the subtree
rooted at $u$ as $T_u$, and the number of nodes of $T$ by $\size{T}$, or simply $n$ if $T$ is known from the context. For two nodes $u,v$, we denote their distance by 
$\distance(u,v)$, their nearest common ancestor by $\NCA(u,v)$.
First, observe that:
$$\distance(u,v) =  \rootdist(u) + \rootdist(v) - 2\cdot\rootdist(\NCA(u,v)).$$ 
This means that a labeling scheme for $\rootdist(\NCA(u,v))$ that assumes $u\ne v$ can actually be used for $\distance(u,v)$ queries  with only $O(\log n)$ additional bits to the label size (the additional bits are simply the distance to the root).
Next, observe that although our input tree is unweighted (i.e., all edges have weight 1), if our distance labeling scheme can handle edges-weights in $\{0,1\}$ then we can assume the input tree is binary and 
that the queries are on leaves only. This can be achieved by connecting every internal node $u$ to a leaf node $u_\ell$ with an edge of weight 0, and then standardly binarizing the tree (by inserting $\Oh(n)$ intermediate nodes with edge-weights 0 connecting them).

\paragraph{Heavy path decompositions.}
We apply a variant of heavy path
decompositions~\cite{thorup2001compact}. We start at the root of the tree $T$ and repeatedly
descend from the current node to its (unique) child whose subtree is of size at least $|T|/2$ as long as possible, that is, we terminate when there is no such child.
Note that this is different than the more common versions in which we descend from the current node $u$ to its child $v$ with the largest subtree until
(depending on the version) we reach a leaf or $|T_v| < |T_u|/2$. 
This gives us a \emph{heavy path} $P$ starting at $\treeroot(T)$ and many subtrees
hanging off the heavy path. We call the edges of $P$ \emph{heavy}, and all other 
edges outgoing from the nodes of $P$ \emph{light}. The construction is then applied
recursively to all subtrees hanging off the heavy path. In the end, each node $u\in T$ has at most
one heavy child, denoted $\heavy(u)$, so we obtain a decomposition of $T$ into disjoint heavy paths (some of which consist of a single node).
The light depth of a node $u\in T$, denoted $\lightdepth(u)$, is the number of light edges on the
path from $u$ to the root and is at most $\log n$~\cite{sleator1983data}.
We order the children of every node $u$ so that $\heavy(u)$ is the rightmost child and
assign preorder numbers $\pre(u)$ to every node $u$. Then, for any node $v\in T_u$, we have
that $\pre(v) \in [\pre(u), \pre(u)+\size{T_u})$.

\begin{figure}[h]
\centering
\subfigure{\includegraphics[width=79mm]{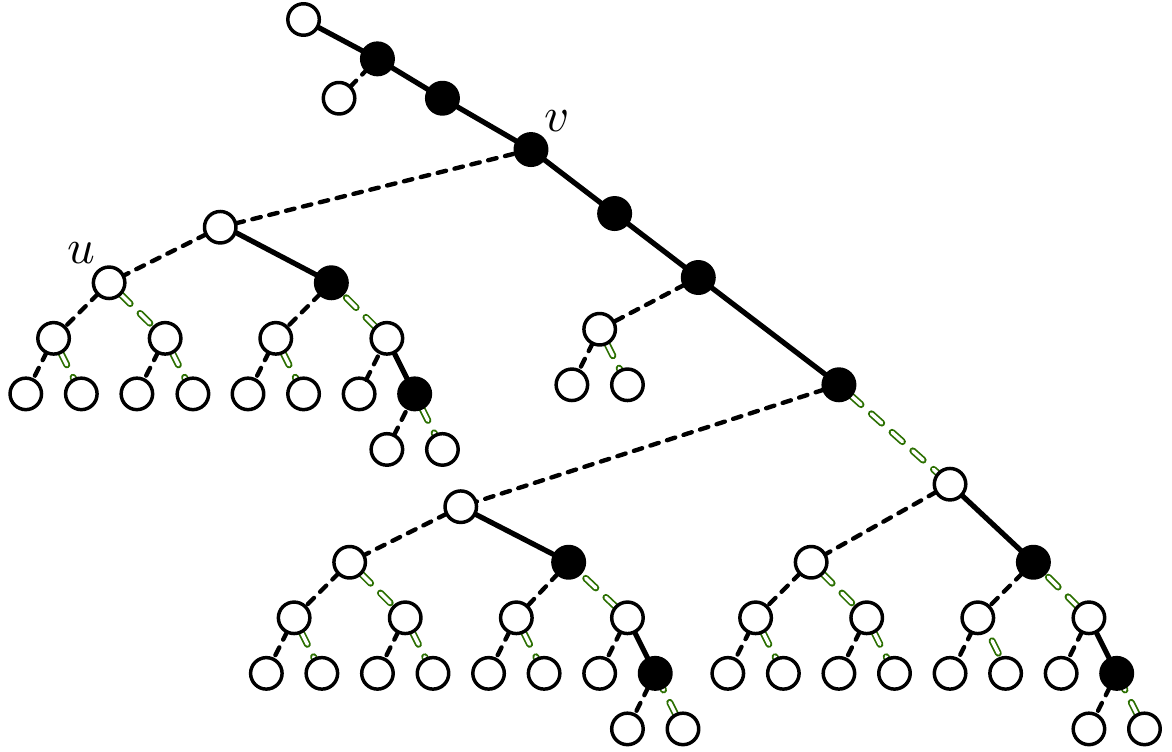}}
\subfigure{\includegraphics[width=79mm]{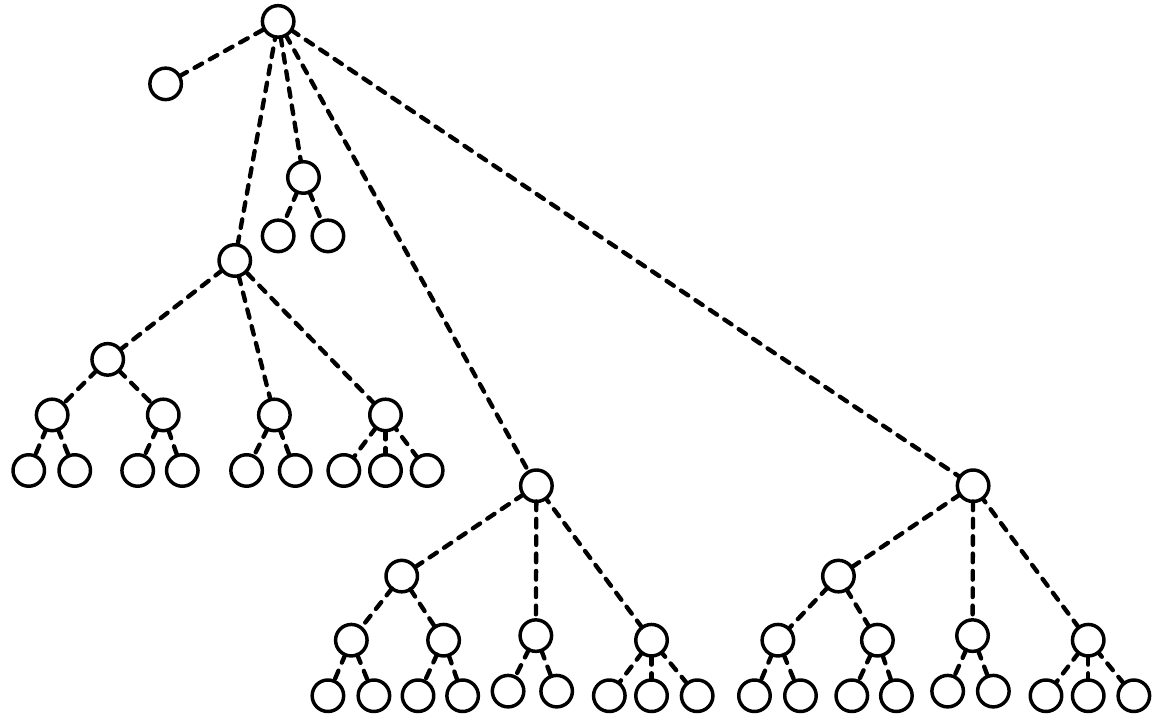}}

\caption{\label{fig:heavy-path-decomp} On the left, a heavy path decomposition of a binary tree $T$.  Light nodes are white, and heavy nodes are black. The heavy edges are solid, the light edges are dashed, and the exceptional edges are dashed and hollowed. On the right, the collapsed tree $\collapsed(T)$.}
\end{figure}

\paragraph{The collapsed tree.}
Given the heavy path decomposition of a {\em binary} tree $T$, we define
its \emph{collapsed tree}, denoted $\collapsed(T)$, whose nodes correspond to heavy paths in $T$. The heavy path starting at $\treeroot(T)$ corresponds to the root of the collapsed tree. Every light edge hanging off this heavy path corresponds to an edge outgoing
from the root of $\collapsed(T)$, and so on.
The children of every node in $\collapsed(T)$ are ordered according to the top-to-bottom order on the hanging subtrees (i.e, if two subtrees connect to the same heavy path $P$ then the one connecting at a lower
depth is to the left of the other). 
Since $T$ is binary, ties can only happen at the last node of the heavy path $P$, in which case we set the right subtree to be the subtree of maximum size, and call the light edge branching to the right subtree  the \emph{exceptional} edge associated with heavy path $P$. See~\cref{fig:heavy-path-decomp} (right). Note that the height of the collapsed tree is at most $\log n$.

Every heavy path $P$ in $T$ is associated with a node $u'$ in $\collapsed(T)$ and every node $u \in P$ is said to be \emph{associated} with $u'$.
We refer to the node $u \in P$ closest to the root of $T$ as the \emph{head} of $P$ and denote it as $\hphead(P)$ or $\hphead(u')$. We use $\lightcount(u,v)$ to denote $\lightdepth(\NCA(u,v))$. Finally, we say that $u$ \emph{dominates} $v$ if the inorder number of $u$'s associated node in $\collapsed(T)$ is smaller than that of $v$'s. 
Observe that (1) If the $\NCA(u,v)$-to-$u$ path in $T$ starts with a light edge and the $\NCA(u,v)$-to-$v$ path starts with a heavy edge then $u$ dominates $v$, and (2) If both these paths start with a light edge then the dominated vertex is the one whose path starts with the exceptional edge. 

\paragraph{Labeling schemes for NCA.}
A nearest common ancestor scheme assigns a unique label to every node, so that given
the labels of nodes $u,v$ we can return the label of $\NCA(u,v)$. Alstrup et al.~\cite{alstrup2014near}
design such a scheme with labels of length $\Oh(\log n)$ bits.
They use a heavy path decomposition that slightly differs from ours, but it can be easily verified that the following lemma still holds:

\begin{lemma}[\cite{alstrup2014near, alstrup2015distance}] \label{lem:NCA scheme} 
There is an $\NCA$ labeling scheme with label size $\Oh(\log n)$, which given the labels of
$u$ and $v$ returns the label of $\NCA(u,v)$ as well as $\lightdepth(u,v)$ in constant time.
\end{lemma}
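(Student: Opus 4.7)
The plan is to build on the heavy path decomposition already introduced in the preliminaries and, for each node $u$, store just enough information to reconstruct both the heavy path containing $\NCA(u,v)$ and its position on that path once the label of a second node $v$ is provided. Since a valid NCA label must uniquely identify every node, a natural backbone for the label is the preorder number $\pre(u)$.

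Concretely, I would let the label of $u$ consist of $(\pre(u),\size{T_u},S(u))$, where $S(u)$ is a compact encoding of the heavy paths crossed on the root-to-$u$ path: walking downward from the root, each time a light edge leaves a heavy path $P$ and enters a new heavy path $P'$, we append to $S(u)$ a record containing the depth on $P$ at which the light edge departed and a small token identifying which light child of that node was taken. At the end, $S(u)$ also records the depth of $u$ inside its own heavy path. The first two components already use $O(\log n)$ bits; the real work is in bounding $|S(u)|$.

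To answer a query I would scan $S(u)$ and $S(v)$ simultaneously and compute their longest common prefix of heavy path identifiers. Suppose this prefix has length $\ell$ and terminates on heavy path $P$; then $\NCA(u,v)$ lies on $P$, its depth on $P$ is the minimum of the two branching depths recorded immediately after the prefix (or the position of $u$ or $v$ itself if that node already lies on $P$), and $\lightdepth(u,v)=\ell$. From the depth on $P$ together with the preorder number of $\hphead(P)$, which the algorithm has already identified from the prefix, one outputs the label of $\NCA(u,v)$ directly.

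The main obstacle is guaranteeing $|S(u)|=O(\log n)$ despite the fact that there can be up to $\log n$ heavy paths above $u$. The saving property is that after following a light edge the new subtree has size at most half of the previous one, so the depth on each successive heavy path can be addressed with geometrically fewer bits, and the branching depths across all levels therefore sum to $O(\log n)$. Packing these variable-length fields into fixed-offset words so that parallel scanning and longest-common-prefix extraction run in $O(1)$ word-RAM time is the engineering step carried out in \cite{alstrup2014near}; that construction is readily augmented to report $\lightdepth(u,v)$ as a byproduct of the longest-common-prefix computation, which is the refinement used in \cite{alstrup2015distance}.
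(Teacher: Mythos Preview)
The paper does not give its own proof of this lemma; it is quoted from \cite{alstrup2014near,alstrup2015distance} with only the remark that the variant of heavy path decomposition used here does not affect its validity. So the relevant question is simply whether your sketch is correct.

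Your outline has the right overall shape---a concatenation of heavy and light labels along the root-to-$u$ path, with $\NCA$ recovered via a longest common prefix computation and $\lightdepth(u,v)$ read off as its length---but the space analysis has a genuine gap. You propose that $S(u)$ record, for each heavy path crossed, the depth at which that path is exited. If the head of the $i$-th heavy path has subtree size $n_{i}$, that depth may be as large as $n_{i}$ and so needs $\log n_{i}$ bits to write down. The halving property across light edges gives $n_{i+1}\le n_{i}/2$, hence $\log n_{i+1}\le \log n_{i}-1$: the bit counts decrease \emph{arithmetically}, not geometrically, and the total $\sum_{i}\log(n/2^{i})$ is $\Theta(\log^{2} n)$, not $O(\log n)$. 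Your phrase ``geometrically fewer bits'' conflates a geometric decrease in subtree size with a geometric decrease in the number of bits needed, which does not follow.

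Getting down to $O(\log n)$ is exactly the non-trivial contribution of \cite{alstrup2014near}, and it does \emph{not} store raw heavy-path depths. The label there is essentially $\pre(u)$ together with $O(\log n)$ bits of parsing information---a monotone sequence of at most $\log n$ integers in $[0,\log n]$, encodable via \cref{lem:encodingsequence}---from which the identifiers of all significant ancestors of $u$ can be reconstructed by truncating $\pre(u)$ at the indicated bit positions (this is the mechanism you can see reused later in \cref{sec:k}). You defer the constant-time query implementation to \cite{alstrup2014near} but claim the $O(\log n)$ space bound from your own argument, and that argument as written establishes only $O(\log^{2} n)$.
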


\paragraph{Encoding integers.} 
To store a single integer $x$, we use Elias $\delta$ codes~\cite{elias1975universal} that require $\log x + \Oh(\log \log x)$ bits. This encoding is {\em self-delimiting}, meaning we can concatenate multiple variable-length values into a single label in a way that each individual value can be decoded later. 
To store a monotone sequence of integers we use the following:

\begin{lemma}
\label{lem:encodingsequence}
A monotone sequence of $s$ integers in $[0,M]$ can be encoded
with $\Oh(s\cdot \max\set{1,\log\frac{M}{s}})$ bits, so that we can:
\begin{enumerate}[label=(\arabic*)]
\item extract the $k^\text{th}$ number in the sequence,
\item find the position of the successor of a given integer in the sequence,
\item given the representation of two sequences, find the longest common suffix of two specified prefixes.
\end{enumerate}
The first operation takes constant time, and the second and third take constant time
if both $s$ and $M$ are $\Oh(\log n)$.
\end{lemma}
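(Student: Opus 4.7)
The plan is to use the Elias--Fano representation of monotone sequences, together with word-RAM bit manipulation in the regime $s,M=\Oh(\log n)$. Let $b=\max\{0,\lfloor \log(M/s)\rfloor\}$ and split each integer as $x_i=h_i\cdot 2^b+l_i$ with $l_i\in[0,2^b)$. Store the low parts $l_1,\ldots,l_s$ as $s$ consecutive $b$-bit fields, using $sb$ bits. Store the high parts as a bitvector obtained by writing the differences $h_1,\,h_2-h_1,\,\ldots,\,h_s-h_{s-1}$ in unary separated by single $0$'s; since $h_s\le M/2^b$, this bitvector has length at most $s+M/2^b=\Oh(s)$. Augmenting it with a standard constant-time $\mathsf{select}$-index adds only $o(s)+\Oh(1)$ further bits, so the total space is $\Oh(sb+s)=\Oh(s\cdot\max\{1,\log(M/s)\})$, as required.

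For operation (1), the $k$-th integer is $h_k\cdot 2^b+l_k$, where $l_k$ is the $k$-th $b$-bit field (extracted by shift-and-mask) and $h_k$ equals the position of the $k$-th one in the high-part bitvector minus $k$, recovered by a single constant-time $\mathsf{select}$. For operation (2), given a query $y=y_h\cdot 2^b+y_l$, two $\mathsf{select}$ queries on the high-part bitvector isolate the contiguous block of indices whose high part equals $y_h$, and we then find the first $l_i\ge y_l$ inside this block. When $s,M=\Oh(\log n)$, the entire block sits inside a single machine word, so the final scan is a constant-time word-parallel comparison (broadcast $y_l$, subtract from the packed low parts, and inspect the resulting sign bits).

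For operation (3), two monotone sequences share a suffix of length $t$ iff the $t$ corresponding low parts and the $t$ corresponding values $h_i$ agree. When $s,M=\Oh(\log n)$, each encoding occupies only $\Oh(\log n)$ bits, so the two representations together with the endpoint indices $k_1$ and $k_2$ fit in $\Oh(1)$ machine words. We align the two low-part blocks so that positions $k_1$ and $k_2$ coincide, XOR them, and locate the highest set bit to obtain the longest common suffix of the low parts; an analogous word-parallel comparison on the aligned high-part segments (corrected by the offset between the two starting $h$ values at the left end of the comparison) handles the high parts, and the final answer is the smaller of the two. The main obstacle is precisely operation (3): unlike (1) and (2), it asks us to compare structural information spread across two independent encodings, and the Elias--Fano layout is not directly bit-for-bit comparable because high parts are stored as relative increments. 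The remedy is the hypothesis $s,M=\Oh(\log n)$, which lets the whole comparison be resolved in $\Oh(1)$ word operations, or equivalently as a single lookup into a universal polynomial-size precomputed table indexed by the packed inputs and the two endpoints.
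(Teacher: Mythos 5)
Your representation is essentially the one the paper itself builds for constant-time access: Elias--Fano (a $b$-bit low part in a fixed field plus a unary-coded high-part bitvector with a select index) is precisely the paper's ``$x_i \bmod b$ in $\lceil\log b\rceil$-bit fields plus $x_i \bdiv b$ in a $\le 2s$-bit vector'' auxiliary structure, so the space bound and operation (1) come out the same way. (Small slip: with your convention of unary $1$-runs separated by $0$s, $h_k$ is the position of the $k$-th \emph{zero} minus $k$, not the $k$-th one.) For operation (2) you diverge from the paper in a harmless way: where the paper plugs in the P\u{a}tra\c{s}cu--Thorup predecessor structure, you exploit $s,M=\Oh(\log n)$ directly and do a word-parallel scan of the low parts within the high-part bucket. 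This is more elementary and perfectly adequate for the regime the lemma actually promises constant time in.

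The place where your proposal wobbles is operation (3). Your low-part comparison (align at $k_1,k_2$, XOR, find highest set bit) is fine because low parts are stored as absolute values. But the high parts are stored \emph{differentially}, and the ``corrected by the offset between the two starting $h$ values at the left end of the comparison'' step is not well-defined: the left endpoint of the candidate common suffix moves as the suffix length varies, so there is no single offset to correct by. The clean fix -- and the one the paper uses -- is to anchor at the \emph{right} end: first test whether $x_{k_1}=y_{k_2}$; if not, the answer is $0$; if so, equality of consecutive earlier elements is exactly equality of the backward differences, so you can compare the (aligned) difference encodings and then round the resulting bit-LCS down to a whole number of elements. Your proposal does not make this anchoring explicit, and as written the high-part comparison can report a nonzero match even when the absolute high parts never agree. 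Your alternative of a universal poly$(n)$-size lookup table indexed by the two $\Oh(\log n)$-bit encodings and the two endpoints does rescue the claim, since the decoder in a labeling scheme may use polynomial global precomputation; it is correct but gives up the explicit constant-time algorithm the paper provides.
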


\begin{proof}
Let the sequence be $0\leq x_1 \leq x_2 \leq \ldots \leq x_s \leq M$. The encoding consists of
$x_1$ and the differences $x_2-x_1, x_3-x_2,\ldots,x_s-x_{s-1}$. Each number is
encoded using the Elias $\gamma$ code, so the total size of the encoding becomes
$\Oh(s+\sum_{i=1}^s \log (x_i-x_{i-1}))$, where $x_0=0$. By Jensen's inequality, this
is maximized when all numbers are equal, so the total size of the encoding
is $L=\Oh(s\cdot \max\set{1,\log\frac{M}{s}})$.

To provide constant time access to every $x_i$, we need to store some auxiliary data.
We partition the universe $[0,M]$ into blocks of length $b=\frac{M}{s}$. 
For each $x_{i}$, we store $x_{i}\bmod b$. This is done by reserving $\lceil\log b\rceil$
bits for every $i=1,2,\ldots,s$ and arranging them one after another.
We also store $b$ encoded using the Elias $\gamma$ code, so that
in constant time we can calculate where the $\lceil\log b\rceil$ bits storing
$x_{i}\bmod b$ are. This takes $\Oh(\log b+s+s\log b)=\Oh(L)$ space
so far. It remains to show how to encode $y_{i}=x_{i}\bdiv b$.
Notice that $0\leq y_{1}\leq y_{2}\leq\ldots\leq y_{s}\leq s$, so
this is a monotone sequence of $s$ integers from $[0,s]$. We encode
it with a single bit vector of length at most $2s$, which is the concatenation
of $0^{y_{i}-y_{i-1}}1$ for $i=1,2,\ldots,s$ (and $y_{0}=0$). Then,
to extract $y_{i}$ we need to find the position $p$ of the $i^\text{th}$ bit set
to 1 in the bit vector and then return $p-i+1$. By augmenting the bit vector with a select
structure of Clark~\cite[Chapter~2.2]{Clark}, which takes $o(s)$ additional bits of space,
we can retrieve the $i^\text{th}$ bit set to 1 in constant time.
Thus, in $O(L)$ additional space we can encode $x_{i}\bmod b$ and $x_{i}\bdiv b$,
and then recover $x_{i}$ in constant time.

To provide constant time successor queries (when both $s$ and $M$ are $\Oh(\log n)$), we remove all duplicates and store
the resulting sequence $y_1 < y_2 < \ldots < y_r$ in an additional predecessor structure
from the second branch of P\v{a}tra\c{s}cu and Thorup~\cite{PatrascuPredecessor}.
This structure uses $\Oh(r \cdot \log M)$ bits and answers queries in
$\Oh(\log\frac{\log M}{\log \log n})=\Oh(1)$ time. The space can be actually improved
to $\Oh(r \cdot \log \frac{M}{r}))=\Oh(s\cdot \max\set{1,\log\frac{M}{s}})$ as explained in detail by
Belazzougui and Navarro~\cite{DjamalSequences}.

Finally, to compute the longest common suffixes of two specified prefixes given
the encodings of $x_1 \leq x_2 \leq \ldots \leq x_s$ and $y_1 \leq y_2 \leq \ldots \leq y_s$,
we observe that for $s,M=\Oh(\log n)$ the encodings fit in a constant number of machine words.
Hence, we can first shift both encodings (in constant time) to reduce the problem to computing
the longest common suffix. First, we check if $x_s=y_s$. If not, we are done.
Otherwise, we only need to find the longest common suffix of the sequences
of differences. This can be done by first calculating the longest common suffix
of their encodings, and then counting how many differences have their encodings
fully in the common suffix. The former can be done in constant time using
the standard word-RAM operations. The latter can be done by storing an additional
bit vector of length $L$, where we mark the starting position of the encoding of each
$x_{i}$ with a bit set to 1. The bit vector is augmented with the rank structure of Jacobson~\cite{Jacobson},
which takes $o(L)$ additional bits and allows us to count bits set to 1 in any prefix in constant time.
\end{proof}

\paragraph{\boldmath$(h,M)$-trees.} To obtain a lower bound for  distance labeling, Gavoille et
al.~\cite{gavoille2004distance} consider a family of rooted binary trees called $(h,M)$-trees. The trees are weighted
and the weight of every edge is in $[0,M]$. For $h=0$ the tree is a single node. For $h\geq 1$,
the tree consists of a root connected to its single child with an edge of length $M-x$ for some $x\in [0,M)$,
and the child is connected to two (possibly different) $(h-1,M)$-trees with edges of length $x$. See~\cref{fig:htree}. A lower bound for tree distance labeling is implied by the following lemma:

\begin{figure}[h]
\begin{center}
\includegraphics[scale=0.7]{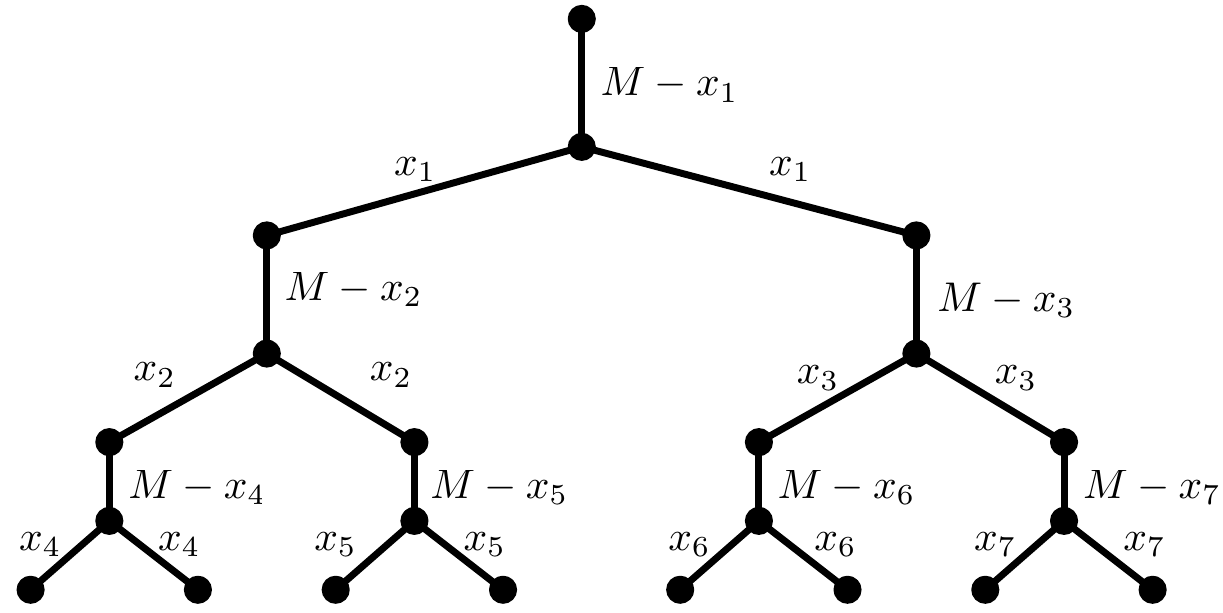}
\caption{A $(3,M)$-tree, where $x_1,\ldots,x_7\in [0,M)$.}
\label{fig:htree}
\end{center}	
\end{figure}

\begin{lemma}[{\cite{gavoille2004distance}}]
\label{lem:tree_lower_bound}
For $h\geq 1$ and $M\geq 2$, any scheme for distance labeling in $(h,M)$-trees requires labels of at least $h/2 \cdot \log M$ bits, even if we only query leaves.

\end{lemma}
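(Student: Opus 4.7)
The plan is a proof by induction on $h$. For the base case $h=1$, the tree consists of two leaves at distance $2x$ with $x \in [0,M)$. Since the two leaves must receive distinct labels and the decoder sees only an unordered pair of labels, we need $\binom{2^L}{2} \geq M$, giving $L \geq \tfrac{1}{2}\log M$.

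For the inductive step, I would decompose an $(h,M)$-tree $T$ into its two $(h-1,M)$-subtrees $T_L, T_R$ joined under the top branching of weight $x_1 \in [0,M)$. The scheme restricted to the leaves of $T_L$ is itself a valid distance labeling scheme for $T_L$, because pairwise distances inside $T_L$ are independent of the rest of $T$, so the inductive hypothesis already forces $L \ge (h-1)/2\cdot\log M$, and the same holds for $T_R$. To extract the missing $\tfrac{1}{2}\log M$ bits, I would fix $T_L$ and $T_R$ and let $x_1$ range over $[0,M)$: for every $u \in T_L$ and $v \in T_R$ we have $\distance(u,v) = 2(h-1)M + 2x_1$, a quantity that depends only on $x_1$, so the joint labeling of $T_L$ and $T_R$ must be injective in $x_1$. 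Writing $A$ and $B$ for the number of distinct label-assignments the scheme produces on the leaves of $T_L$ and $T_R$ across these $M$ trees, we obtain $A \cdot B \ge M$, and hence $\max(\log A,\log B) \ge \tfrac{1}{2}\log M$.

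The final step combines the two lower bounds. Since $T_L$ itself is unchanged as $x_1$ varies, the $\ge\sqrt{M}$ distinct labelings of $T_L$ (in the case $A\ge\sqrt{M}$) all solve the very same $(h-1,M)$-instance. These extra $\tfrac{1}{2}\log M$ bits of ``which labeling is in use'' are intended to be genuinely orthogonal to the $(h-1)/2\cdot\log M$ bits already forced by the inductive hypothesis, so they stack additively and some label inside $T_L$ must end up having length at least $(h-1)/2\cdot\log M + \tfrac{1}{2}\log M = h/2\cdot\log M$.

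The main obstacle is precisely making this orthogonality step rigorous: one has to rule out the scheme cleverly reusing the same bits of a single label to both resolve internal distances of $T_L$ and encode $x_1$. The cleanest route I foresee is to run the counting at the level of \emph{tuples} of labels rather than individual labels, and to apply the inductive bound to one fixed instance of $T_L$, so that the $\sqrt{M}$-fold blow-up in the number of admissible labelings translates into a genuinely additive $\tfrac{1}{2}\log M$ growth in the maximum label length; special care is needed because an average-case bit count across leaves only yields an $\Omega(\log M)$ lower bound, and the $h/2$ factor must come from concentrating the ``which-labeling'' information on a single leaf by iterating the decomposition.
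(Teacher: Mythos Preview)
This lemma is not proved in the paper --- it is quoted from Gavoille et al.~\cite{gavoille2004distance} --- so there is no in-paper proof to compare against. I will therefore assess your proposal on its own merits.

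Your base case and the first half of the inductive step are fine (modulo a minor quibble: the two leaves need not receive distinct labels when $x_1=0$, so the correct count is $\binom{2^L+1}{2}\ge M$, but this still yields $L\ge \tfrac12\log M$). The decomposition into $T_L,T_R,x_1$ and the observation that the pair of labelings must determine $x_1$, hence $A\cdot B\ge M$, are both correct.

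The gap you yourself flag is genuine and is \emph{not} repaired by the fix you sketch. Having $\sqrt{M}$ distinct global labelings of a fixed $(h{-}1,M)$-tree $T_L$ does not force any single leaf's label to grow by $\tfrac12\log M$ bits: $T_L$ has $2^{h-1}$ leaves, and the $\sqrt{M}$ labelings can differ by spreading a tiny amount of variation across many leaves while every individual label stays at length $(h{-}1)/2\cdot\log M$. Concretely, even if each label takes only $M^{(h-1)/2}$ values, there are $(M^{(h-1)/2})^{2^{h-1}}\gg\sqrt{M}$ possible tuples. So the ``$\sqrt{M}$ labelings $\Rightarrow$ $\tfrac12\log M$ extra bits on some leaf'' implication simply fails, and counting tuples of labels on a fixed $T_L$ cannot rescue it.

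The standard argument in \cite{gavoille2004distance} avoids this trap by \emph{not} fixing $T_L$ and by working with a stronger, global inductive invariant rather than the per-tree max-label-length. One clean formulation: show that the total number of distinct labels used across \emph{all} $(h,M)$-trees is at least $M^{h/2}$; equivalently, from any scheme for $(h,M)$-trees one extracts a scheme for $(h{-}1,M)$-trees whose label set is smaller by a factor $\sqrt{M}$. The point is that the $x_1$-information is extracted by pairing an arbitrary $T_L$-leaf label against an arbitrary $T_R$-leaf label \emph{within the same tree}, which partitions the set of label pairs (not the labelings of one fixed subtree) into $M$ classes --- and this partition interacts correctly with the inductive hypothesis when the hypothesis is phrased at the level of the global label set. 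Your per-tree, fixed-$T_L$ formulation loses exactly this leverage.
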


\section{Distance Labeling}
\label{sec:ub}

In this section we prove~\cref{thm:ub-tree-dist}. In~\cref{sec:review} we review the labeling scheme framework of the existing solutions (in a slightly different way), and in~\cref{sec:modified} we describe our improved solution and its analysis. 

\subsection{Distance Arrays}\label{sec:review}
We now review the general framework for distance labeling. For each node  $u \in T$, consider the set of light edges $\ell_1(u), \ldots, \ell_k(u)$ along the root-to-$u$ path. 
For any light edge $e$ in the collapsed tree $\collapsed(T)$ branching from $u'$ to its child $v'$  let $\distance(e) = \distance(\hphead(u'),\hphead(v'))$. 
That is, the distance along the heavy path represented by $u'$ to the endpoint where the light edge branches and to its other end.
Let $D(u)$ denote the list $[ \distance(\ell_1(u)), \ldots, \distance(\ell_k(u)) ]$, which we call the \emph{distance array} of $u$.
The next lemma shows that designing an efficient distance labeling scheme boils down to efficiently encoding distance arrays.

\begin{lemma}
\label{lem:dist-arrays}
If we can access the elements of the distance arrays $D(u)$ and $D(v)$ then with  additional $\Oh(\log n)$ bits we can compute $\distance(u,v)$.
\end{lemma}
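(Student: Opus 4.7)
The plan is to attach two $O(\log n)$-bit items to each label: the value $\rootdist(u)$, and an NCA label from \cref{lem:NCA scheme}. Since $\distance(u,v) = \rootdist(u) + \rootdist(v) - 2\rootdist(\NCA(u,v))$ and $\rootdist(u), \rootdist(v)$ are stored, the task reduces to computing $\rootdist(w)$, where $w = \NCA(u,v)$, from $D(u)$ and $D(v)$.

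Let $j = \lightcount(u,v)$, which the NCA labels deliver in constant time, and let $P$ be the heavy path containing $w$. The root-to-$w$ path is a common prefix of the root-to-$u$ and root-to-$v$ paths, so their first $j$ light edges coincide and hence the first $j$ entries of $D(u)$ and $D(v)$ agree. These entries telescope to $\rootdist(\hphead(P))$, giving $\rootdist(w) = \sum_{i=1}^{j} \distance(\ell_i(u)) + \depth_P(w)$. The one remaining quantity is $\depth_P(w)$.

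For each $x \in \{u,v\}$, let $a(x) := \rootdist(x) - \sum_{i=1}^{k(x)} \distance(\ell_i(x))$, the depth of $x$ within its own heavy path, recoverable from the label. Define
\[
q(x) = \begin{cases} a(x), & \text{if } k(x) = j,\\ \distance(\ell_{j+1}(x)) - 1, & \text{if } k(x) > j, \end{cases}
\]
where in the second case the value $\distance(\ell_{j+1}(x))-1$ is the depth on $P$ at which the path to $x$ departs from $P$ through the light edge $\ell_{j+1}(x)$. Intuitively, $q(x)$ is either the position on $P$ where $x$'s path leaves $P$, or the position of $x$ itself when $x$ lies on $P$. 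The key claim is $\depth_P(w) = \min(q(u), q(v))$.

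The only nontrivial step is verifying this identity, and it is the main obstacle of the proof. One splits into the four cases given by whether each of $k(u), k(v)$ equals $j$ or exceeds $j$. In each case a short geometric argument on $P$ shows that the shallower of $q(u), q(v)$ is precisely $w$: whenever one candidate sits strictly below the other on $P$, the deeper node still has the shallower candidate as an ancestor (either because it continues down $P$ through that point, or because it branches off $P$ below it), so the shallower candidate is already a common ancestor, while below it the paths to $u$ and $v$ diverge. Once this identity is established, assembling $\rootdist(w) = \sum_{i=1}^{j} \distance(\ell_i(u)) + \min(q(u),q(v))$ and applying the root-distance formula yields $\distance(u,v)$, with everything computed from $D(u), D(v)$ and the stored $O(\log n)$ extra bits.
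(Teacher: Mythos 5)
Your proof is correct, but it takes a genuinely different route from the paper's. The paper additionally stores the inorder number of $u$'s node in the collapsed tree, uses it to determine which of $u,v$ \emph{dominates} the other (swapping so that $u$ dominates), and then observes that dominance forces the $\NCA(u,v)$-to-$u$ path to begin with the light edge $\ell_j(u)$ (with $j=\lightcount(u,v)+1$), yielding directly $\rootdist(\NCA(u,v)) = \sum_{i=1}^{j}\distance(\ell_i(u)) - 1$ from $D(u)$ alone. You dispense with dominance entirely: you reconstruct from both arrays the depths $q(u), q(v)$ on the common heavy path $P$ at which the paths to $u$ and $v$ leave $P$ (or where $u,v$ themselves lie on $P$), and argue $\depth_P(\NCA(u,v)) = \min(q(u), q(v))$. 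Your identity is correct (the shallower branch-off point is a common ancestor and the paths diverge immediately below it), so the argument goes through; it is more symmetric and even avoids storing the inorder number, though both auxiliary budgets are $\Oh(\log n)$ either way. One thing worth noting: the paper's choice to introduce dominance here is not wasted — it becomes essential in Section~3.2, where the modified distance arrays deliberately break the symmetry between $u$ and $v$ and one must know which node's accumulator to read, so the paper's formulation sets up that later machinery while yours would need to reintroduce it.
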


\begin{proof}
We first describe the additional $\Oh(\log n)$-bits. They are composed of: 
\begin{enumerate}
\item \label{list:exact_root_dist}
$\rootdist(u)$, 
\item \label{list:exact_nca}
the NCA label of $u$ generated by~\cref{lem:NCA scheme}, 
\item \label{list:exact_inorder}
the inorder number of the node $u'$ corresponding to $u$ in $\collapsed(T)$ (so that given $u,v\in T$ we can determine which node dominates the other).
\end{enumerate}

Now, suppose that $u$ is associated with $u' \in \collapsed(T)$ and $v$ with $v' \in \collapsed(T)$.
We use (\ref{list:exact_nca}) to determine $j = \lightcount(u,v) + 1$.
We assume that the inorder number of $u'$ is smaller than that of $v'$ (using (\ref{list:exact_inorder}) we can verify this and
swap $u$ with $v$ otherwise).
Thus, $u$ dominates $v$, which implies that $\rootdist(\NCA(u,v))=\sum_{i=1}^{j} \distance(\ell_i(u)) - 1$.
Recall from~\cref{sec:prelim} that $\rootdist(\NCA(u,v))$ together with $\rootdist(u)$ and $\rootdist(v)$ suffice to compute $\distance(u,v)$.
\end{proof}

\subsection{Modified Distance Arrays}
\label{sec:modified}

The main challenge remaining, is how to efficiently encode $D(u)$ for an arbitrary node $u$. 
This can clearly be done using $\Theta(\log^2 n)$ bits.  By using properties of the heavy path decomposition, Alstrup et al.~\cite{alstrup2015distance} gave a more precise bound of: $\sum_{i=1}^{k} \log \distance(\ell_i(u))  = \sum_{i=1}^{\log n} \log(n/2^i) = 1/2\log^2 n + \Oh(\log n).$
In their description, sums of the suffixes of $D(u)$ are stored instead of $D(u)$ itself, but this is essentially
the same.
Furthermore, distance arrays must be made self-delimiting by adding an additional $\Oh(\log n \log \log n)$-bits, so we get an overall space bound of $1/2\log^2 n + \Oh(\log n \log \log n)$. 

In this section, we present an improved method and analysis for encoding the distance arrays. We show that our encoding uses less space, but in the process we  lose the ability to compute the sum $\sum_{i=1}^{j} \distance(\ell_i(u))$, which is used to answer the query. 
However, in~\cref{sec:adjustment} we show that in fact a query can still be answered by adding only a small amount of auxiliary information. 
Our \emph{modified distance array} $\hat{D}(u)$ will have the following key property, which is weaker than that of the original distance array:

\begin{property}
\label{prop:mod}
Given the modified distance arrays $\hat{D}(u)$ and $\hat{D}(v)$ for leaves $u,v \in T$ such that $u$ dominates $v$, we can compute the value $\distance(\ell_j(u))$ where $j = \lightcount(u,v) + 1$.
\end{property}

\noindent

At a high level, the main idea behind the modified distance array is that, to reduce the number of bits stored for each distance $\distance(\ell_1(u)), \ldots, \distance(\ell_k(u))$ at node $u$, we potentially push some of the bits to labels of nodes dominated by $u$.
This is acceptable if our goal is to satisfy~\cref{prop:mod} since we need only  compute $\distance(\ell_i(u))$ if the other queried node $v$ is dominated by $u$.
An important observation for the analysis later is that if $\ell_i(u)$ is {\em exceptional}, we need not store $\distance(\ell_i(u))$ at all in order to satisfy~\cref{prop:mod}.  
The modified distance array consists of two parts: 
\begin{enumerate} 
\item 
a list of \emph{truncated distances} $\hat{\distance}(\ell_1(u)), \ldots, \hat{\distance}(\ell_k(u))$;
\item 
a list of \emph{accumulators} $\accum(\ell_1(u)), \ldots, \accum(\ell_k(u))$. 
\end{enumerate}
Accumulator $\accum(\ell_i(u))$ will potentially (but not necessarily) store some of the bits of the distances $\distance(\ell_i(v))$ where $v$ is a node that dominates $u$, and $i = \lightcount(u,v) + 1$.

The construction of the labels is recursive: Consider the heavy path $P$ extending from the root of $T$.  
Let $n_1,\ldots,n_{m+1}$ be the sizes of the subtrees $T_1, \ldots, T_{m+1}$ hanging from $P$ via light edges $e_1, \ldots, e_{m+1}$, where $e_{m+1}$ is the \emph{exceptional} edge.
The edges $e_1, \ldots, e_{m+1}$ are ordered according to their left-to-right order in the collapsed tree, and we use $w_1, \ldots, w_m$ to denote the nodes in $P$ from which $e_1, \ldots, e_m$ branch
($e_{m+1}$ also branches from $w_m$). See~\cref{fig:analysis}.
We use $n_1', \ldots, n_m'$ to denote the sizes of the subtrees $T_1', \ldots, T_{m}'$ rooted at nodes $w_1, \ldots, w_{m}$. For consistency, $n_{m+1}'$ denotes the size of $T_{m+1}'=T_{m+1}$.
Note that, for an arbitrary node $u \in T_i$ we have that $\ell_1(u) = e_i$.

\begin{figure}[h]
\centering
\includegraphics[scale=0.7]{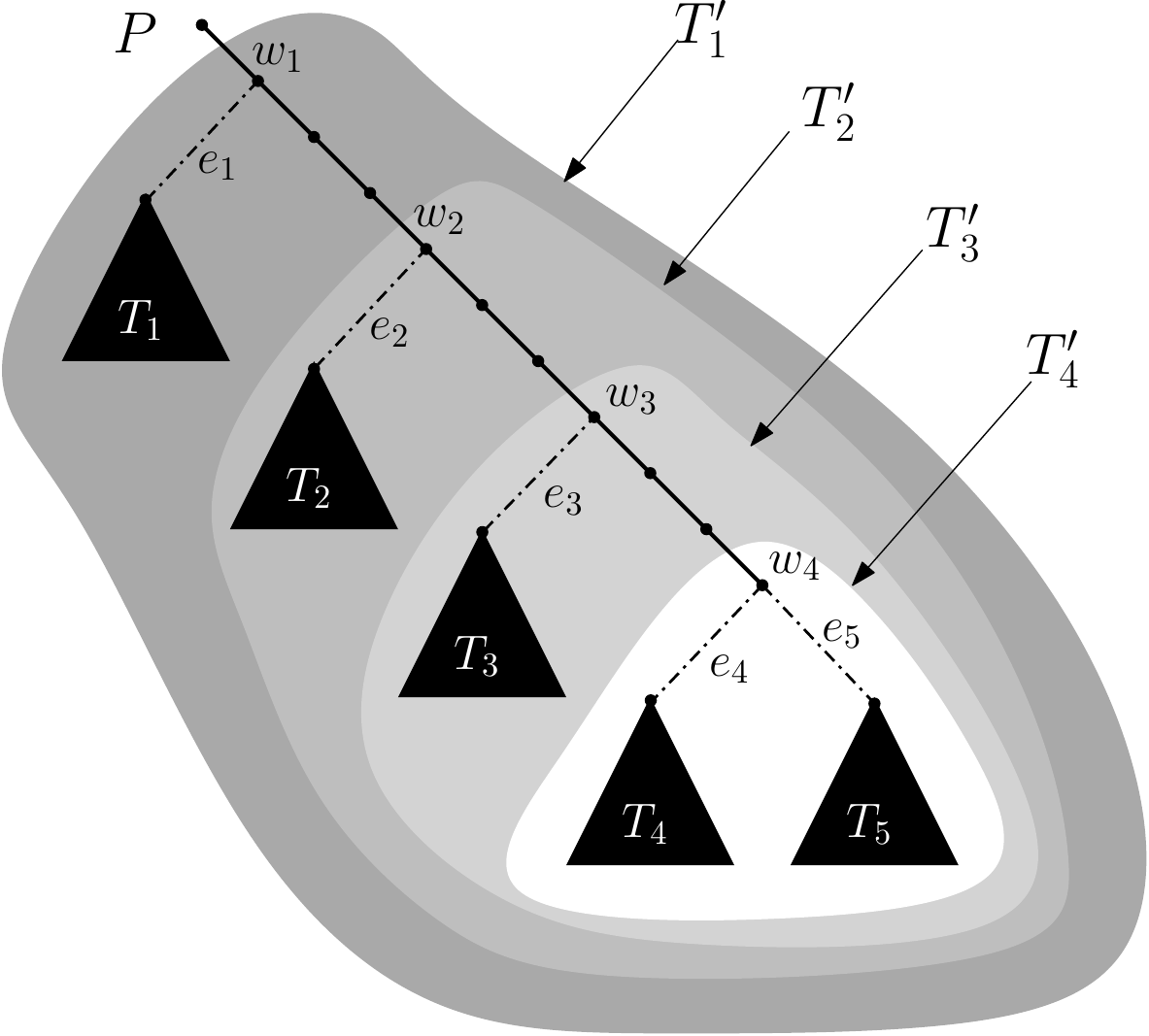}
\caption{\label{fig:analysis}A heavy path $P$ and the subtrees $T_1, \ldots, T_{m+1}$ and $T_1', \ldots, T_{m}'$. The edge $e_5$ is exceptional.}
\end{figure}

For an arbitrary node $u \in T_i$ where $i \in [1,m]$, we assume that we have some encoding of its modified distance array excluding the encoding of $\distance(e_i) = \distance(\ell_1(u))$ using $(\log^2 n_i+\log n_i \log q)/4$ bits, where
$q$ is a parameter to be fixed later.
We call this encoding the \emph{recursive problem}, and the problem of encoding $\distance(\ell_1(u))$ the \emph{top-level problem}. Recall that if $i=m+1$ (i.e., $u \in T_{m+1}$) we need not encode the distance $\distance(e_{m+1})$, since that edge is \emph{exceptional}.

We analyze the space of the top-level problem for $T'_i$ for $i =m,m-1,\ldots,1$ (i.e., from bottom to top), bounding the overall label size in terms of $n_i'$.
The goal of each iteration is to produce labels for $T_i'$ of size $(\log^2 n_i' + \log n_i' \log q)/4$. 
Consider the labels generated in the recursive problem for nodes in $T_i$ and in the previous iteration for $T_{i+1}'$ (or, if $i=m$, in the recursive problem for $T_{m+1}$).
The following two lemmas show how many bits we can spend to generate the labels for nodes in $T_i'$.
Note that these lemmas ignore the cost of making the encoding self-delimiting, as well the fact that we must take the ceiling of the bound because we cannot store a fraction of a bit. We handle these issues later.

\begin{lemma}[Slack Lemma]
\label{lem:slack}
Assume that the recursive problem for nodes in $T_i$ can be solved by storing an encoding of size $(\log^2 n_i + \log n_i \log q)/4$ bits for some parameter $q$.  
If $n_i = p\cdot n_i'$ and $p \ge 1/q$ then we can spend additional $1/2\log(1/p)\log n_i'$ bits on the top-level problem for nodes in $T_i$  to obtain an encoding of size $(\log^2 n_i' + \log n_i' \log q)/4$ bits.
\end{lemma}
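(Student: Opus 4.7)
The plan is to treat the lemma as a pure arithmetic bookkeeping statement: by hypothesis, the recursive problem already delivers labels of size $(\log^2 n_i + \log n_i \log q)/4$ bits for every node in $T_i$, and we are free to append up to $\tfrac{1}{2}\log(1/p)\log n_i'$ further bits per label; what needs to be verified is that the sum fits inside the target budget $(\log^2 n_i' + \log n_i'\log q)/4$. So the entire proof reduces to checking the inequality
\[
\frac{\log^2 n_i + \log n_i \log q}{4} \;+\; \frac{\log(1/p)\,\log n_i'}{2}\;\le\; \frac{\log^2 n_i' + \log n_i'\log q}{4}
\]
under the assumptions $n_i = p\cdot n_i'$ and $p\ge 1/q$.

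To perform this check, I would substitute $a := \log n_i'$ and $b := \log(1/p)$, noting that $b\ge 0$ and $\log n_i = \log(p n_i') = a - b$. The recursive contribution becomes $((a-b)^2 + (a-b)\log q)/4$ and the additional top-level contribution becomes $ab/2 = 2ab/4$. Adding these and expanding yields
\[
\frac{a^2 - 2ab + b^2 + (a-b)\log q + 2ab}{4} \;=\; \frac{a^2 + b^2 + a\log q - b\log q}{4}.
\]
Subtracting the target $(a^2 + a\log q)/4$ leaves the single term $\tfrac{1}{4}\,b(b-\log q)$, so the target bound holds iff $b \le \log q$, equivalently $\log(1/p)\le \log q$, equivalently $p \ge 1/q$, which is exactly the hypothesis.

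There is essentially no obstacle here, since no new encoding scheme is introduced by this lemma; the only subtlety worth flagging is that $b\ge 0$ (because $p\le 1$, as $n_i\le n_i'$), which is what makes the $+ab/2$ term legitimate and ensures the additional budget is nonnegative. Once the inequality is verified, the conclusion follows: whatever encoding is used to represent $\distance(e_i)$ in the top-level problem (to be produced elsewhere) can be absorbed into the label as long as it does not exceed $\tfrac{1}{2}\log(1/p)\log n_i'$ bits, and the resulting labels for $T_i'$ satisfy the inductive budget $(\log^2 n_i' + \log n_i'\log q)/4$.
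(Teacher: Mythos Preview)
Your proof is correct and is essentially the same computation as the paper's: you verify that the recursive cost plus the additional $\tfrac{1}{2}\log(1/p)\log n_i'$ bits fits under the target, while the paper computes the target minus the recursive cost and shows it is at least $\tfrac{1}{2}\log(1/p)\log n_i'$, which is the same inequality rearranged. Your substitution $a=\log n_i'$, $b=\log(1/p)$ makes the role of the hypothesis $p\ge 1/q$ (equivalently $b\le\log q$) slightly more transparent, but the argument is identical in content.
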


\begin{proof}
To prove the lemma it is enough to calculate the difference between the size of the final encoding
and the encoding for the recursive problem:
\begin{align*}
& = (\log^2 n_i' + \log q \log n_i')/4 - (\log^2(p\cdot n_i') + \log q \log (p\cdot n_i'))/4 \\
& = (\log^2 n_i' + \log q \log n_i' -  (\log p + \log n_i')^2 - \log q \log p - \log q \log n_i')/4 \\
& = (\log^2 n_i' + \log q \log n_i' -  \log^2 p -2\log p\log n_i' -  \log^2 n_i' - \log q \log p - \log q \log n_i')/4 \\
& = (-\log^2 p -2\log p\log n_i'- \log q \log p)/4 \\
& = (2\log(1/p)\log n_i' + \log q \log(1/p) - \log^2(1/p) )/4 \\
& = (2\log(1/p)\log n_i' + \log(1/p)(\log q  - \log(1/p))/4 \\
& \geq 1/2\log(1/p)\log n_i'. &
\qedhere
\end{align*}
\end{proof}

Additionally, we have the following:

\begin{lemma}[Thin Lemma]
\label{lem:thin}
Assume that the recursive problem for nodes in $T_i$ can be solved by storing  an encoding of size $(\log^2 n_i +  \log n_i \log q) / 4$ bits for some parameter $q \ge 2$.
If $n_i = p\cdot n_i'$ and $p\le 1/2^8$ then we can spend additional $2\log n_i'$ bits on the top-level problem for nodes in $T_i$ to obtain an encoding of size $(\log^2 n_i' + \log n_i' \log q) / 4$ bits.
\end{lemma}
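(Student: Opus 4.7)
The plan is to follow the same algebraic template used in the proof of the Slack Lemma: compute the difference between the budget $(\log^2 n_i' + \log n_i' \log q)/4$ for nodes in $T_i'$ and the recursive budget $(\log^2 n_i + \log n_i \log q)/4$ already consumed by the encoding for nodes in $T_i$, and argue that this difference is at least $2\log n_i'$. Since the inequality is purely algebraic, no structural reasoning about heavy paths is needed here — the lemma is just quantifying how much headroom the thinness condition $p \le 1/2^8$ buys us.

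Concretely, I would substitute $a := \log n_i'$ and $b := \log(1/p)$, so that $\log n_i = a - b$ and $b \ge 8$ by assumption. Expanding as in the Slack Lemma gives
\begin{align*}
& (\log^2 n_i' + \log n_i' \log q)/4 - (\log^2 n_i + \log n_i \log q)/4 \\
&\qquad = (a^2 - (a-b)^2 + b \log q)/4 \\
&\qquad = b\,(2a - b + \log q)/4.
\end{align*}
It thus suffices to show $b\,(2a - b + \log q) \ge 8a$.

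The key observation — and the only genuine input beyond arithmetic — is that $T_i$ is non-empty, so $n_i \ge 1$, which forces $\log n_i = a - b \ge 0$, i.e.\ $a \ge b$. Combined with $b \ge 8$ and $\log q \ge 1$ (since $q \ge 2$), we get $2a - b + \log q \ge a + 1 \ge a$, and therefore
\[
b\,(2a - b + \log q)/4 \ge b \cdot a / 4 \ge 8a/4 = 2a = 2\log n_i',
\]
which is exactly the slack the lemma promises.

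I do not anticipate a real obstacle: once one rewrites the difference in the factored form $b(2a - b + \log q)/4$, the two hypotheses $b \ge 8$ and $a \ge b$ line up immediately to yield the desired bound. The only mild subtlety is remembering to invoke $n_i \ge 1$ (to get $a \ge b$); without this one would have to reason about the vacuous case where $T_i$ is empty, but in that case the top-level problem carries no information to encode and the lemma holds trivially.
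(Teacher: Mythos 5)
Your proof is correct and follows essentially the same algebraic template as the paper: both compute the difference in budgets and reduce it to the factored form $\log(1/p)\bigl(2\log n_i' - \log(1/p) + \log q\bigr)/4$, then invoke $p \le 1/2^8$, $n_i \ge 1$ (equivalently $p \ge 1/n_i'$), and $q \ge 2$. The paper phrases this as a contrapositive (assume the slack is insufficient and derive $p > 1/2^8$), while you argue directly; this is a purely stylistic difference.
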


\begin{proof}
Similarly as in the proof of~\cref{lem:slack}, we calculate the difference:
\begin{align*}
& = (\log^2 n_i' + \log q \log n_i')/4 - (\log^2(p\cdot n_i') - \log q \log(p\cdot n_i'))/4 \\
& = (2\log(1/p)\log n_i' + \log(1/p)(\log q  - \log(1/p))/4.
\end{align*}
Now, assuming that $2\log n_i'$ is larger than the difference and using that
$p\geq 1/n_i'$ we obtain:
\begin{align*}
2\log n_i' &>  1/2\log(1/p)\log n_i' + 1/4\log(1/p)\log q - 1/4\log^2(1/p) \\
& \geq  1/2\log(1/p)\log n_i' - 1/4\log^2(1/p)  \\
& \geq 1/2\log(1/p)\log n_i' - 1/4\log(1/p)\log n_i' \\
& = 1/4\log(1/p)\log n_i'
\end{align*}
so, after dividing by $\log n_i'$, $8 > \log(1/p)$ and $p > 1/2^8$. Hence for $p \leq 1/2^8$
we can indeed use $2\log n_i'$ additional bits.
\end{proof}

We call $T_i$ \emph{thin} if $n_i\le n_i'/2^8$, and \emph{fat} otherwise.
We observe that, by the definition of the heavy path decomposition, $\log n \le \log(2n_i') \le 2\log n_i'$.
Thus, an immediate consequence of~\cref{lem:thin} is that if $T_i$ is thin, then we can afford to store $\distance(e_i)$ explicitly as $\hat{\distance}(e_i)$, without having to push any bits to the accumulators of nodes in $T_{i+1}, \ldots, T_{m+1}$.  
However, if $T_i$ is fat,~\cref{lem:slack} indicates that we do not have enough \emph{slack} to store all the bits of $\distance(e_i)$.
Instead, we store as many bits as the slack allows (rounding up to the nearest bit) in the labels of nodes $u$ in $T_i$.
We then append all the remaining bits to the accumulators $\accum(\ell_i(v))$ of all nodes $v \in \bigcup_{j = i+1}^{m+1} T_{j}$ (i.e., nodes dominated by $u$).

Because $T_i$ is fat, by the slack lemma for nodes in $T_i$ we have slack $1/2\log (n_i'/n_i) \log n_i' $
(the assumption that $T_i$ is fat allows us to adjust the constant $q$).
On the other hand, using the same calculations as in the slack lemma, the nodes in $T_{i+1}'$ have slack $1/2\log (n_{i}'/n_{i+1}')\log n_i' $: note that the size of $T_{i+1}'$ is larger than $n_i'/2^8$ by the properties of the heavy
path decomposition, as either $i<m$ and
$n'_{i+1} \geq n/2$, or $i=m$ and then $n_{i+1} \geq n_i$ so $n'_{i+1} \geq n/4-1/2$. 
Since $n_i' > n_i + n_{i+1}'$, we have that the sum $1/2(\log(n_i'/n_i) + \log(n_i'/n_{i+1}'))\log n_i'$ can be lower bounded by the minimum of $1/2(\log(1+x) + \log(1+x^{-1}))\log n_i$ for $x \in (0,\infty)$.  Thus, the slack is at least $\log n_i'$ bits in total.
However, the distance $d(\ell(u)_1)$ occupies $\log n$ bits, rather than $\log n_i'$.
As before, we can use the properties of the heavy path decomposition to bound $\log n \le \log(2n_i') = 1+\log n_i'$.
Thus, $\distance(\ell_1(u))$ occupies one extra bit more than we have accounted for with the slack.
We store this extra bit in the truncated distance $\hat{\distance}(\ell_1(u))$.
Therefore, the truncated distance $\hat{\distance}(\ell_1(u))$ consists of the most significant $\lceil 1/2\log (n_i'/n_i) \log n_i'  \rceil + 1$ bits of $\distance(\ell_i(u))$. The remaining least significant $\lfloor 1/2\log (n_i'/n_{i+1}') \log n_i'  \rfloor$ bits are concatenated to the accumulators of the nodes dominated by $u$ in $T_{i+1}, \ldots, T_{m+1}$.

For each entry in the modified distance array for a node $u$, we are pushing at most two extra bits beyond those accounted for in the slack lemma.
Thus, this works out to an additional $\Oh(\log n)$-bits in total, per label.
We make both parts of the modified distance array (the accumulators and truncated distances) self-delimiting, and also record, for each truncated distance, the number of bits pushed to the accumulators of dominated nodes.  
Overall, we end up with the following:

\begin{lemma}
\label{lem:mod-dist-array}
The modified distance array $\hat{D}(u)$ occupies at most $1/4\log^2 n + \Oh(\log n \log \log n)$ bits.
\end{lemma}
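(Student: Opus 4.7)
The plan is to prove this bound by strong induction on subtree size, establishing the slightly stronger invariant that for a subtree of size $n$, every modified distance array fits in $\lceil(\log^2 n + \log n \cdot \log q)/4\rceil + \Oh(\log n \log \log n)$ bits, where $q = 2^8$ is a fixed constant ensuring $\log(1/p) \le \log q$ for every fat subtree. Since $\log q = \Oh(1)$, the leading term is $\tfrac{1}{4}\log^2 n + \Oh(\log n)$, from which the claimed bound follows once the self-delimiting overhead is added in.

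For the inductive step, let $P$ be the heavy path at the root of $T$ with subtrees $T_1,\ldots,T_{m+1}$ hanging off via light edges $e_1,\ldots,e_{m+1}$ as in the setup. Processing $i=m,m-1,\ldots,1$ from bottom to top, I would maintain the invariant that after step $i$, every node in $T_i'$ has a partial modified distance array of size at most $(\log^2 n_i' + \log n_i' \cdot \log q)/4$ bits (ignoring the self-delimiting overhead, handled separately). The induction hypothesis supplies this bound for the recursive problem on $T_i$ (in terms of $n_i$), and the previous iteration supplies it for $T_{i+1}'$ (in terms of $n_{i+1}'$). To extend the invariant to $T_i'$ (in terms of $n_i'$), I encode $\distance(e_i)$ as follows. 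If $T_i$ is \emph{thin}, the Thin Lemma permits $2\log n_i' \ge \log n$ extra bits in the labels of $T_i$, enough to store $\distance(e_i)$ verbatim as $\hat{\distance}(e_i)$. If $T_i$ is \emph{fat}, I split $\distance(e_i)$ into the top $\lceil \tfrac{1}{2}\log(n_i'/n_i)\log n_i'\rceil + 1$ bits stored as $\hat{\distance}(e_i)$ (within the Slack Lemma's budget at $T_i$ plus one bit of correction for $\log n \le 1 + \log n_i'$), and the remaining bottom bits appended to the accumulator of every dominated node in $T_{i+1}\cup\cdots\cup T_{m+1}$ (within the $\tfrac{1}{2}\log(n_i'/n_{i+1}')\log n_i'$ slack at $T_{i+1}'$, available since $n_{i+1}' \ge n_i'/2^8$ by the heavy path property). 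Using $n_i + n_{i+1}' \le n_i'$ and the concavity of $\log$, the two slacks sum to at least $\log n_i'$ bits, covering the $\log n$ bits of $\distance(e_i)$ with one bit to spare for the ceiling. The exceptional edge $e_{m+1}$ requires no encoding, by~\cref{prop:mod}.

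To finish, I would tally the self-delimiting cost. The array $\hat{D}(u)$ has at most $\lightdepth(u) \le \log n$ truncated distance entries and at most as many accumulator entries; making each field self-delimiting with Elias $\delta$ codes costs $\Oh(\log\log n)$ bits per entry, as does recording how many accumulator bits were pushed for each entry. This contributes a total overhead of $\Oh(\log n \log \log n)$, which combined with the $\tfrac{1}{4}\log^2 n + \Oh(\log n)$ from the recursive invariant gives the claimed bound.

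The principal obstacle is verifying that the accumulator budget is consistent across the recursion: bits pushed down from the top-level problem at one heavy path level must land in slots whose space was genuinely reserved, and not double-counted, by the Slack Lemma applied at deeper levels. Concretely, the slack $\tfrac{1}{2}\log(n_i'/n_{i+1}')\log n_i'$ allocated at each node $v \in T_{i+1}'$ to receive bits of $\distance(e_i)$ must be disjoint from any slack consumed by the deeper recursive problem within $T_{i+1}'$; I would justify this by observing that the deeper recursion's budget scales with $n_{i+1}'$ rather than $n_i'$, so the extra slack naturally absorbs the incoming bits without collision.
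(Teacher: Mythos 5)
Your proposal follows essentially the same route as the paper: a bottom-to-top induction along the root heavy path that invokes the Thin Lemma for subtrees with $n_i \le n_i'/2^8$ and the Slack Lemma for fat subtrees, splits $\distance(e_i)$ between the truncated distance at $T_i$ and the accumulators of dominated nodes in $T_{i+1}',\ldots,T_{m+1}$ using the combined slack $\tfrac{1}{2}\bigl(\log(n_i'/n_i)+\log(n_i'/n_{i+1}')\bigr)\log n_i' \ge \log n_i'$, and then charges the ceilings, the extra bit from $\log n \le 1 + \log n_i'$, and the Elias-$\delta$ self-delimiting to an $\Oh(\log n \log\log n)$ lower-order term. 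The only cosmetic slip is the aside that the combined slack covers $\log n$ bits ``with one bit to spare'' --- in fact it falls one bit short, which is exactly why both you and the paper add the $+1$ correction to the truncated distance --- but this does not change the argument.
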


It remains to show that these modified distance arrays satisfy~\cref{prop:mod}.
To see this, consider the modified distance array for $u$ and $v$, where $j = \lightcount(u,v) + 1$, and $u$ dominates $v$.
We have stored the number of bits that were pushed to the accumulator $\accum(\ell_j(v))$ explicitly.
The starting position of this contiguous range of bits can be found by noticing that
the accumulator $\accum(\ell_j(u))$ is a suffix of $\accum(\ell_j(v))$, since all nodes that dominate $u$ also
dominate $v$.  Hence, knowing the length of the accumulator $\accum(\ell_j(u))$ allows us to
determine the starting position and, together with the explicitly stored number of pushed bits,
recover the bits themselves.
By combining them with $\hat{\distance}(\ell_j(u))$ we can reconstruct $\distance(\ell_j(u))$. 

We have therefore satisfied~\cref{prop:mod}. It remains to show why this is enough for a distance query. In~\cref{sec:adjustment} we show that it is, with only additional lower order terms to the label size. 

\subsection{Wrapping Up the Proof of Theorem~\ref{thm:ub-tree-dist}}
\label{sec:adjustment}

In order to prove~\cref{thm:ub-tree-dist} we need to show how to answer a distance query without inflating the space of~\cref{lem:mod-dist-array} by more than lower order terms.

Let $u$ be some node contained in the heavy path mapped to $u' \in \collapsed(T)$, and consider the path from $u'$ to the root of $\collapsed(T)$.
We partition this path into $B = \sqrt{\log n}$ \emph{fragments}.
The first fragment is the prefix starting at the root, denoted $f_0(u)$, and terminating at the first node $f_1(u)$ such that the subtree rooted at $\hphead(f_1(u))$ has size at most $n/2^B$. 
The $i$-th fragment is defined recursively from $f_{i-1}(u)$, ending at a node $f_i(u)$ such that the subtree rooted at $\hphead(f_i(u))$ has size at most $n/2^{iB}$, for $i \in [1, h]$, where $h=\Oh(\sqrt{\log n})$.
We explicitly store the distances $\distance(f_i(u), \treeroot(T))$ for each $i \in [1,h]$ as the \emph{fragment distance array} $F(u)$.

Next, consider a light edge $e$ in $\collapsed(T)$ that branches from the heavy path corresponding to
$u'$ to the heavy path corresponding to $v'$.
Recall that, in bounding the number of bits for the modified distance arrays, we used the fact that if the subtree rooted at $\hphead(u')$ has size $n$, then the distance, $r = \distance(\hphead(u'),\hphead(v'))$, associated with the light edge $e$ is bounded by $n$.
Instead of recording this distance $r$, for each node $u$ that stores $r$ we instead record the distance $r' = \distance( \hphead(f_j(u)), \hphead(v'))$, where $j$ is the largest index such that the subtree rooted at $\hphead(f_j(u))$ contains node $\hphead(v')$.

Obviously, $r'$ requires more bits to store than $r$, $\Oh(\sqrt{\log n})$ additional bits to be precise.
However, since there are at most $\log n$ truncated distances in $\hat{D}(u)$, we can afford to inflate each of these by $\Oh(\sqrt{\log n})$ bits. This only increases the lower order space term to $\Oh(\log^{1.5} n)$ bits.
Furthermore, for each truncated distance, we can also afford to store the corresponding index $j$ from the fragment array using $\Oh(\log n \log \log n)$ extra bits.
Thus, since~\cref{prop:mod} still holds after expanding the truncated distances, we can now recover $r'$ and read $\distance(f_j(u), \treeroot(T))$ from $F(u)$.
These two values sum to $\distance( \hphead(u'), \treeroot(T))$, which is exactly
what we wanted to compute with distance arrays.

The proof of~\cref{thm:ub-tree-dist} follows from the above $1/4 \log^2 n + o(\log^2 n)$-bit labeling scheme and the fact that we only need to label leaves and can assume $T$ is binary (see~\cref{sec:prelim}).

\subsection{Query Time Analysis}

Up until now we have not discussed how long it takes to compute the distance given two labels for nodes $u$ and $v$.  Let us summarize the steps that are required to answer a query:

\begin{enumerate}
\item Ensure $u$ dominates $v$, and swap them if that is not the case.  This can be done by examining the inorder number for $u$ and $v$, which are explicitly stored; Lemma~\ref{lem:dist-arrays} item~(\ref{list:exact_inorder}).
\item Extract the explicitly stored distances of $u$ and $v$ from the root; Lemma~\ref{lem:dist-arrays} item~(\ref{list:exact_root_dist}).
\item Compute the index $j = \lightcount(u,v) + 1$, this is done using the explicitly stored NCA encoding; Lemma~\ref{lem:dist-arrays} item~(\ref{list:exact_nca}).
\item Extract the truncated distance $\hat{\distance}(\ell_j(u))$ from array $\hat{D}(u)$.  Note that $\hat{D}(u)$ contains $\Oh(\log n)$ values, and has length $\Oh(\log^2 n)$ bits.
\item Extract the accumulator values $\accum(\ell_j(u))$ from array $\hat{D}(u)$ and $\accum(\ell_j(v))$ from array $\hat{D}(v)$.
\item Extract explicitly stored lengths of accumulator values $\accum(\ell_j(u))$ and $\accum(\ell_j(v))$.  Note that there are $\Oh(\log n)$ explicitly stored lengths, and these lengths occupy $\Oh(\log n \log \log n)$ bits.
\item Use bitwise arithmetic to extract the relevant bits of $\accum(\ell_j(v))$ which are then concatenated with $\hat{\distance}(\ell_j(u))$.  This can be done with a constant number of shifts, bitwise and/or operations, and subtractions.
\item Extract the fragment number for $j$, as well as the fragment distance from array $F(u)$.  There are $\Oh(\log n)$ fragment numbers, occupying a total of $\Oh(\log n \log \log n)$ bits, and a total of $\Oh(\sqrt{\log n})$ fragment distances, occupying a total of $\Oh(\log^{1.5} n)$ bits.
\item Compute the overall distance using addition and subtraction.
\end{enumerate}

With the exception of accessing the values stored in the various arrays just mentioned, all steps clearly take constant time.  It remains to show how to access each array element in constant time (without increasing the space bound by more than a lower order term).  First, we explicitly store the offsets of each of the (constant number of) data structures mentioned above (arrays, individual values, and the NCA labeling) for each label in a header, which is encoded using Elias $\delta$ codes in order to be self-delimiting.  This header occupies at most $\Oh(\log n)$ bits, and provides constant time access to each data structure.  Next we discuss how to access the array elements in constant time.  Earlier, we mentioned that we used Elias $\delta$ codes to delimit each array element and then concatenate their encodings.
Now, for each array that occupies $x$ bits in total and stores $y$ elements, let $p_{1} < p_{2} < \ldots < p_{y}$ be the positions of the first bit
of the encoding of each element in the concatenation. We apply Lemma~\ref{lem:encodingsequence} to this sequence. This takes
$\Oh(x\cdot \max\set{1,\log\frac{y}{x}})$ and allows us to calculate the first and the last bit of the encoding of any element in constant time.
For each of our arrays, $x = \Oh(\log n)$ and $y = \Oh(\log^{2}n)$, so storing the sequences increases the total space by only
$\Oh(\log n \log\log n)$ bits. 
Since there is a constant number of arrays, we can afford to mark the location of their corresponding sequences in the header using
$\Oh(\log n)$ bits. Then, each array access can be performed in constant time.

\subsection{Lower bound for the Level-Ancestor Problem}
\label{sec:lb}

In this section we prove~\cref{thm:lb-lvl-anc}.
The main idea of the proof is to show a lower bound for the parent problem, where the goal is to
assign a distinct label to every $u\in T$ so that given the label of $u$ we can return the label
of its parent (or a special value $\bot$ denoting that $u=\treeroot(T)$.
This is clearly a special case of the level-ancestor problem. The lower bound
is obtained by showing a correspondence between the parent problem and the following \emph{universal tree} problem: what is the size of the smallest rooted tree containing any rooted tree on $n$ nodes
as a subtree?
The connection between these two problems is captured by the following lemma.

\begin{lemma}
\label{lem:uni-tree}
If there exists a labeling scheme for the parent problem on trees of size $n$ that produces labels of size at most $S(n)$, then there exists a universal rooted tree containing all rooted trees on up to $n$ nodes
as subtrees of size $\Oh(2^{S(n)})$.
\end{lemma}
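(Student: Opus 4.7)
The plan is to exploit the fact that any parent labeling scheme is specified by a single, tree-independent decoder $f$ that sends a label to the label of its parent (or to $\bot$ at the root). The universal tree will be built directly from $f$: its nodes are the possible labels, and its edges are the pointers induced by $f$.

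First I would formalize $f$ as a function on the universe $\{0,1\}^{\le S(n)}$ of all bit strings of length at most $S(n)$, extending it arbitrarily (e.g.\ to $\bot$) wherever the scheme does not prescribe a value. Let $\Lambda$ be the subset of those bit strings $\ell$ for which iterating $f$ reaches $\bot$ in finitely many steps. Construct the candidate universal tree $U$ on node set $\{r\}\cup\Lambda$, where $r$ is a newly introduced super-root, by declaring the parent of $\ell\in\Lambda$ to be $f(\ell)$ when $f(\ell)\ne\bot$ and $r$ otherwise. By the terminating property defining $\Lambda$, repeatedly following parent pointers from any $\ell$ reaches $r$ in finitely many steps, and no cycles can appear; hence $U$ is a rooted tree. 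Since $|\Lambda|\le 2^{S(n)+1}-1$, we get $|U|=\Oh(2^{S(n)})$.

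Second, I would verify universality. Given any rooted tree $T$ on at most $n$ nodes, apply the scheme's labeler to $T$ to obtain a distinct label $\ell_u$ for every $u\in T$. Correctness of the scheme forces $f(\ell_u)=\ell_{\parent(u)}$ for every non-root $u$ and $f(\ell_{\treeroot(T)})=\bot$, so every such $\ell_u$ lies in $\Lambda$, and the map $u\mapsto\ell_u$ is an injective homomorphism from $T$ into $U$ that preserves the parent relation; its image is a subtree of $U$ hanging off $r$ via the label of $\treeroot(T)$. Distinctness of labels within a single tree, which any correct parent scheme must satisfy, is precisely what makes the embedding injective.

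The main obstacle is that $f$ is only guaranteed to behave correctly on labels that the scheme actually emits for some tree on $n$ nodes; on arbitrary bit strings $f$ could cycle, so the naive functional graph of $f$ need not be a forest of inverted trees, and blindly declaring every bit string of length $\le S(n)$ a node of $U$ would fail. Restricting to $\Lambda$ surgically removes these cycles without losing any label that can ever arise as the scheme's output, since correctness guarantees such a label always traces a finite path to $\bot$ via $f$. Once this restriction is in place, the rest of the argument is direct bookkeeping.
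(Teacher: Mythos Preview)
Your argument is correct and in fact slightly simpler than the paper's. Both proofs build the universal tree from the functional graph of the decoder $f$, but they diverge in how they deal with potential cycles. The paper takes the vertex set to be all labels ever emitted by the scheme, then worries that the resulting out-degree-$\le 1$ digraph may contain a cycle; it handles this by deleting one edge from each cycle and duplicating the whole component, bounding the blow-up by a factor of~$2$. You instead make the sharper observation that any label actually produced by the encoder must reach $\bot$ under iteration of $f$ (following parent labels in a finite tree with distinct labels terminates at the root), and you simply discard everything outside the set $\Lambda$ of labels with this property. This kills all cycles without losing any emitted label, so the duplication step becomes unnecessary. Your route therefore avoids the factor-of-$2$ overhead and the case analysis in the paper's construction, at the cost of needing the one-line argument that emitted labels lie in $\Lambda$; the paper's route is more ``syntactic'' in that it never uses correctness of the scheme beyond $f$ being a function of the label.
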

\begin{proof}
The proof is by construction.
Let $V$ be the set of all possible labels generated by the labeling scheme, and $E$ be the
directed edges between these labels defined as follows: if, a label $u$ is assigned to a node
of some tree, and $v\not=\bot$ is the label returned by the scheme for $u$, then $(u,v)$
belongs to $E$. Note that $v$ is determined solely from the bits of $u$, hence
the graph $G = (V,E)$ consists of one or more directed cycles.
See~\cref{fig:function-to-tree} (left) for an example of such a graph.
It is clear that $G$ must contain any tree $T$ on up to $n$ nodes as a subgraph, since the
labeling scheme works for all trees on $n$ nodes or less.
$G$ is not necessarily a tree itself, but we now
describe a general procedure that converts $G$ into a new graph $G' = (V',E')$ that
itself is a rooted tree, and is such that that $|V'| \le 2|V| + 1$.

Each weakly connected component of $G$ is either already a tree, or contains a cycle.
In the latter case,
we arbitrarily remove an edge $(u,v)$ from the cycle (in the figure the chosen edge
is intersected by the dashed line).  
After deleting $(u,v)$ we duplicate the entire weakly connected component, and add a new edge $(u,v')$ where $v'$ is the duplicate of $v$.
After doing this for each weakly connected component, we have increased the number of vertices to at most $2|V|$, and the resultant graph $G'$ is a forest of rooted trees.  
We add a single global root to make $G'$ a rooted tree. The total number of nodes in $G'$ is
hence at most $2|V|+1$.

Since $G$ was a universal graph for rooted trees on $n$ nodes, any rooted tree not containing the deleted edge clearly appears as a subgraph in $G'$.  
Moreover, for any rooted tree $T$ containing $(u,v)$, there exists some subpath of the cycle in $G$ which was in $T$.
Since we duplicated each node in the cycle, it is clear that any such subpath also exists in $G'$ (together with any trees rooted at nodes in the subpath), thus, $T$ appears as a subtree in $G'$.

The final detail is to consider the maximum length label output by the labeling scheme, which
consists of $S(n)$ bits.
Hence, there are at most $2^{S(n)}$ nodes in $G$ and therefore at most $\Oh(2^{S(n)})$ nodes in $G'$.
\end{proof}

\begin{figure}
\begin{center}
\includegraphics[width=0.95\textwidth]{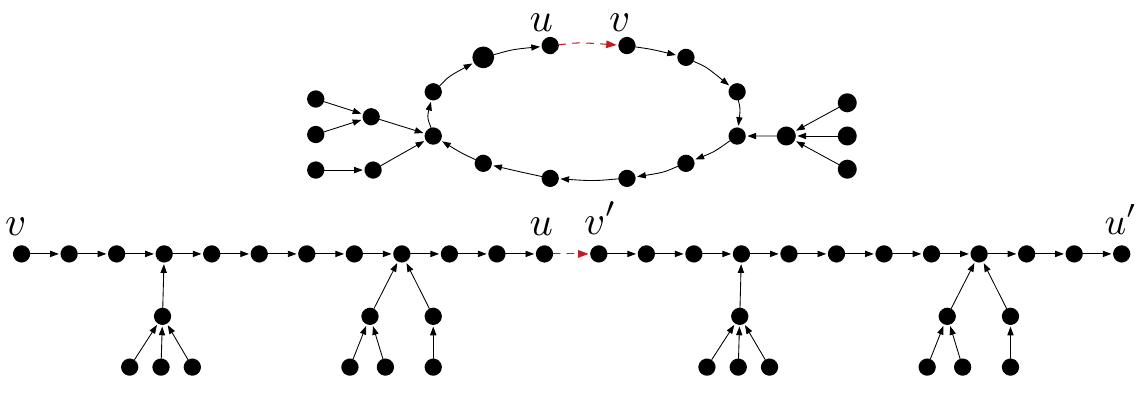}
\end{center}
\caption{\label{fig:function-to-tree}Converting a weakly connected component to a rooted tree $G'$ by duplicating the path at the dotted line.}
\end{figure}

Equipped with the previous lemma, we immediately get a lower bound on $S(n)$, provided we have a lower bound on the number of nodes in such a rooted universal tree.  Goldberg and Lifschitz~\cite{gol1968minimal} have proved very accurate bounds on the number of nodes in such rooted universal trees (see~\cite{graham1981trees} for the bound as we state it):

\begin{lemma}[\cite{graham1981trees,gol1968minimal}]
\label{lem:goldberg}
The smallest rooted tree containing all rooted trees on up to $n$ nodes as subtrees has size $n^{(\log n - 2\log \log n + \Oh(1))/2}$.
\end{lemma}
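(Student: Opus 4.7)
The plan is to prove the bound in two halves: construct a universal rooted tree of the claimed size, and then show via a counting argument that no universal tree can be smaller.

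\textbf{Upper bound.} I would build the universal tree $U(n)$ recursively, guided by the heavy-path decomposition of any target tree $T$. Root $T$ arbitrarily and extract the heavy path from the root: a root-to-descendant path of length at most $n$ such that every subtree hanging off is of size at most $n/2$ and the sum of hanging sizes is at most $n-1$. Crucially, for every dyadic $s \in \{1,2,4,\ldots,n/2\}$, the number of hanging subtrees of size at least $s$ is at most $n/s$. This suggests the construction: $U(n)$ consists of a path of length $n$ from the root, and for each dyadic $s \le n/2$ I attach, at $n/s$ evenly spaced nodes of the path, an independent copy of $U(s)$. A direct greedy embedding shows any $T$ on $n$ nodes embeds into $U(n)$. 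The size recurrence $|U(n)| = n + \sum_{s} (n/s)\,|U(s)|$ with $s$ ranging over dyadic values up to $n/2$ unfolds, for $n=2^k$, to $|U(n)| = n^{(\log n - 2\log\log n + \Oh(1))/2}$; the gain of $2\log\log n$ over the naive $|U(n)|\le n\cdot|U(n/2)|$ comes precisely from using $n/s$ rather than $n$ copies of each $U(s)$.

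\textbf{Lower bound.} I would identify a family $\mathcal{F}_n$ of rooted trees on $n$ nodes whose sheer cardinality forces universality to be expensive. A natural choice is the collection of trees obtained by taking a spine of length $\log n$ and, at the $i$-th spine node, attaching a subtree whose size is chosen from a controlled set of dyadic values totaling $\le n$, with this construction iterated recursively inside each attached subtree. A careful combinatorial count gives $|\mathcal{F}_n| \ge n^{(\log n - 2\log\log n + \Oh(1))/2}$. Now, given a universal tree $U$, for each $T\in \mathcal{F}_n$ fix an embedding $\phi_T\colon T \hookrightarrow U$ and extract the \emph{signature} of $T$: the image of its spine and, recursively, the signatures of its attached subtrees. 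I would argue that distinct elements of $\mathcal{F}_n$ produce distinct signatures, and that the total number of possible signatures in $U$ is bounded by a polynomial in $|U|$ with degree roughly $\log n$. Hence $|U|^{\Oh(\log n)} \ge |\mathcal{F}_n|$, yielding $|U| \ge n^{(\log n - 2\log\log n + \Oh(1))/2}$.

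\textbf{Main obstacle.} The difficulty is squeezing out the correct constant in the $-2\log\log n$ term on both sides. On the upper bound side this demands showing that a single universal tree suffices simultaneously for every possible distribution of light-subtree sizes along the spine, which requires delicate accounting of reuse across dyadic size classes. On the lower bound side the challenge is demonstrating that the signatures produced by different $T\in\mathcal{F}_n$ are genuinely distinguishable in $U$, so that the enumeration of signatures does not undercount. Matching these two bounds up to the $\Oh(1)$ in the exponent is where the Goldberg--Lifschitz argument is truly subtle; any sloppiness in either estimate costs an unrecoverable $\Theta(\log\log n)$ factor in the exponent.
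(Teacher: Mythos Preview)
The paper does not prove this lemma at all: it is stated as a known result, attributed to Goldberg and Lifschitz (with the formulation taken from Chung, Graham, and Coppersmith), and is then combined with Lemma~\ref{lem:uni-tree} to derive Theorem~\ref{thm:lb-lvl-anc}. There is therefore no proof in the paper to compare your proposal against; you are attempting to reprove a cited external theorem that the authors simply quote.

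Your sketch is in the right spirit for those cited results, but as written it has gaps that would require substantial work to close. On the upper bound side, attaching copies of $U(s)$ at $n/s$ \emph{evenly spaced} nodes of the spine is not obviously sufficient: when embedding an arbitrary $T$, each light subtree hangs off the heavy path at a specific depth, and you must argue that every possible depth profile can be routed to your fixed attachment points while simultaneously respecting the dyadic size buckets. The actual construction is more careful about which $U(s)$ are attached where, and the recurrence does not unfold quite as cleanly as a one-line remark suggests. On the lower bound side, the signature argument is too vague to be convincing: you assert both that distinct trees in $\mathcal{F}_n$ yield distinct signatures and that the number of possible signatures in $U$ is at most $|U|^{\Oh(\log n)}$, but proving these two claims together with matching constants is exactly the hard part, as you yourself concede in your final paragraph. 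None of this is needed for the present paper, however, since the lemma is imported as a black box.
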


By combining Lemmas~\ref{lem:uni-tree} and~\ref{lem:goldberg},~\cref{thm:lb-lvl-anc} follows immediately.

\subsection{Effective Level-Ancestor Scheme}
\label{sec:lvl-anc}

While Alstrup et al.~\cite{alstrup2015distance} describe their scheme in terms of labeling
for distance queries, in fact it is not difficult to tweak it to obtain a scheme for level-ancestor
queries. We describe the necessary modifications to obtain a scheme for parent queries, i.e.,
assign distinct labels to every node $u\in T$ so that given the label of $u$ we can return
the label of its parent. This immediately implies a scheme for level-ancestor queries by
repeatedly moving to the parent as long as necessary.

The labeling consists of three parts. For a node $u$ on a heavy path $P$ we store:
\begin{enumerate}
\item $\distance(u, \treeroot(T))$,
\item the $\Oh(\log n)$ label generated by~\cref{lem:NCA scheme} applied on $\hphead(P)$,
\item the array $D(u)$ and, additionally, $\distance(u,\hphead(P))$.
(This is differently phrased but essentially equivalent to what the original labeling stores.)
\end{enumerate}
The labels in the NCA scheme are required to be distinct, so the labels of nodes belonging to
different heavy paths are distinct. For two nodes on the same heavy path, storing
$\distance(u,\hphead(P))$ explicitly ensures that their labels are not the same.
Each label consists of $1/2\log^2 n+\Oh(\log n)$ bits, because of the bound on the encoding
of $D(u)$.
We need to argue that given the label of $u \not=\bot$ we can construct the
label of its parent. 

The NCA labeling scheme from~\cref{lem:NCA scheme} has the property that the label of every
node $u$ is a concatenation of heavy and light labels $h_0.\ell_1.h_1.\ell_2 \ldots \ell_k.h_k$.
These labels uniquely determine the path from the root to $u$: $h_0$ encodes how far along the
heavy path starting at the root we should continue. Then, either $k=0$ and $u$ in fact
lies on the heavy path starting at the root, or $\ell_1$ encodes which light edge outgoing
from the current node should be followed. Finally, $h_1.\ell_2\ldots \ell_k.h_k$
recursively encodes the remaining part of the path to $u$ in the subtree hanging off the heavy path
starting at the root. It is not necessarily true that given the NCA label of a node $u$
we can determine the NCA label of its parent. However, by truncating the NCA label of $u$
we can construct the NCA label of the parent of $\hphead(P)$.

Given the label of $u$, we construct the label of its parent $u'$ as follows.
$\distance(u, \treeroot(T))$ needs to be decreased by 1. Then we inspect $\distance(u,\hphead(P))$.
If $u\neq \hphead(P)$, we decrease $\distance(u,\hphead(P))$ by 1 and are done.
Otherwise, we can use the NCA label of $u$ to determine the label of its parent as
explained above. Let $P'$ be the heavy path of $u'$. The last element of the array
$D$ is $\distance(u,\hphead(P'))$, so by subtracting 1 we obtain
$\distance(u',\hphead(P'))$. Finally, we remove the last element of $D$.

\section{\boldmath$k$-Distance Labeling}
\label{sec:k}

In this section we prove~\cref{thm:krestricted}. Recall that in $k$-distance labeling,  given the labels of $u$ and $v$ we need to decide if the length of the $u$-to-$v$ path is at most $k$, and if so return it. 

\subsection{Lower Bound for Small \boldmath$k$}
\label{sec:smallk}

We define a family of trees and show that labeling the leaves of all trees in that family for
$k$-distance queries requires $\log n + \Omega (k \cdot \log \frac{\log n}{k\log k})$-bits.

An $\vec{x}$-regular tree, where $\vec{x} = (x_1,\cdots,x_k) \in \mathbb{N}^k$, is a rooted tree
of height $k$ where all depth-$i$ nodes have the same degree $x_{i+1}$.
An $(\vec{x}, h, d)$-regular tree, where $(x_1,\cdots,x_k) = \vec{x} \in [h]^k$, is a 
$\vec{y}$-regular tree with $\vec{y} = (d^{x_1}, d^{h - x_1},d^{x_2}, d^{h - x_2},\cdots, d^{x_k}, d^{h-x_k})$.
The total number of leaves in such a tree is $d^{k\cdot h}$. See~\cref{fig:reg2} for an example.

\begin{figure}[h]
\centering	
\begin{tikzpicture}[scale=0.5,nodes={draw,circle,fill,inner sep=0pt,minimum size=5pt},thick]
	\def \x{5}
	\def \y{2.5}
	\def \z{0.5}
    \node{}
	[sibling distance=\x cm]
    child{node {}
		[sibling distance=\y cm]
		child{node{}
			[sibling distance=\z cm] 
			child{node{} child{node{}}}
			child{node{} child{node{}}}
			child{node{} child{node{}}}
			child{node{} child{node{}}}
		}
		child{node{}
			[sibling distance=\z cm]
			child{node{} child{node{}}}
			child{node{} child{node{}}}
			child{node{} child{node{}}}
			child{node{} child{node{}}}
		}
	}	
    child{node {}
		[sibling distance=\y cm]
		child{node{}
			[sibling distance=\z cm]
			child{node{} child{node{}}}
			child{node{} child{node{}}}
			child{node{} child{node{}}}
			child{node{} child{node{}}}
		}
		child{node{}
			[sibling distance=\z cm]
			child{node{} child{node{}}}
			child{node{} child{node{}}}
			child{node{} child{node{}}}
			child{node{} child{node{}}}
		}
	}	
;
\end{tikzpicture}
\caption{$(\vec{x},d,h)$-regular tree with $\vec{x} = (1,2)$ and $d=h=2$}	
\label{fig:reg2}
\end{figure}
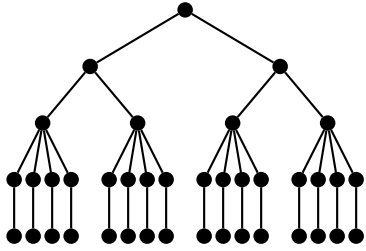

We consider $(\vec{x},h,d)$-regular
trees for some parameters $h$ and $d$ to be chosen later. Consider a labeling scheme
that assigns a label to every leaf of such a tree for $2k$-distance queries.
The following lemma shows that a $(\vec{x},h,d)$-regular tree and a $(\vec{y},h,d)$-regular tree cannot share many identical labels. More formally, let
$\common(\vec{x},\vec{y})$ denote the maximum number of labels that can be used in both trees. The following is  an upper bound on the sum of $\common(\vec{x},\vec{y})$.

\begin{lemma}\label{lem:common}
$\sum_{\vec{x}, \vec{y} \in [h]^k}{\common (\vec{x},\vec{y})} \le  \left(h\cdot d^h \left(1 + \frac{2}{d-1}\right)\right)^k$. 
\end{lemma}

\begin{proof}\belowdisplayskip=-12pt
We first prove that $\common(\vec{x},\vec{y}) \leq \prod_{i=1}^{k}{d^{\min\set{x_i, y_i}}d^{h - \max\set{x_i,y_i}}}$.

By asking all $2k$-distance queries between a specified subset $S$ of leaves of
the $(\vec{x},h,d)$-regular tree we can recover the shape of the subtree induced by $S$.
Hence, if two trees share $\common(\vec{x},\vec{y})$ labels, then they must have
a common isomorphic subtree on $\common(\vec{x},\vec{y})$ leaves. To bound the
maximum number of leaves in such a subtree, observe that the degree of a node
at depth $2i-2$ is at most $\min\set{d^{x_i},d^{y_i}}$, and the degree of
a node at depth $2i-1$ is at most $\min\set{d^{h-x_i},d^{h-y_i}}$. The maximum number
of shared labels is hence the product of all these quantities over $i=1,\ldots,k$. 
We conclude that that $\common(\vec{x},\vec{y}) \leq \prod_{i=1}^{k}{d^{\min\set{x_i, y_i}}d^{h - \max\set{x_i,y_i}}}$. It then follows that
\begin{align*}
\sum_{\vec{x}, \vec{y} \in [h]^k}{\common (\vec{x},\vec{y})} & = 
\sum_{\vec{x}, \vec{y} \in [h]^k}\prod_{i=1}^{k}{d^{\min\set{x_i, y_i}}d^{h - \max\set{x_i,y_i}}} \\
& = \prod_{i=1}^{k}{\sum_{1 \leq x,y \leq h}{d^{\min\set{x, y}}d^{h - \max\set{x,y}}}} 
	\\
& = \prod_{i=1}^{k}{(h\cdot d^h + 2\sum_{x < y}{d^x d^{h - y}})} \\
& = \prod_{i=1}^{k}{(h\cdot d^h + 2\sum_{x = 1}^{h-1}{d^x\sum_{y=0}^{h-x-1}{d^{y}}})}
	\\
& = \prod_{i=1}^{k}{(h\cdot d^h + 2\sum_{x = 1}^{h-1}{d^x\frac{d^{h-x}-1}{d-1}})}
	\\
& \leq \prod_{i=1}^{k}({h\cdot d^h + 2\cdot h \frac{d^h}{d-1})}
	\\
& = \left(h\cdot d^h \left(1 + \frac{2}{d-1}\right)\right)^k.
\end{align*}
\end{proof}

\noindent Since the total number of leaves in the $(\vec{x},h,d)$-regular trees  family is $d^{k\cdot h} \cdot h^k$, the number of distinct labels required to label them is thus at least:

$$
d^{k\cdot h} \cdot h^k -\sum_{\vec{x} < \vec{y}}{\common(\vec{x},\vec{y})}  =
d^{k\cdot h} \cdot h^k - \frac{1}{2}\sum_{\vec{x} \neq \vec{y}}{\common (\vec{x},\vec{y})} 
 = \frac{3}{2} \cdot d^{k\cdot h} \cdot h^k - \frac{1}{2}\sum_{\vec{x}, \vec{y} \in [h]^k}{\common (\vec{x},\vec{y})}.
$$

\noindent Now we set $d=2k+1$, and since $(1 + \frac{1}{k}) \leq e^{\frac{1}{k}}$ we have from~\cref{lem:common} that
$\sum_{\vec{x}, \vec{y}}{\common (\vec{x},\vec{y})}  \leq e \cdot d^{k\cdot h} \cdot h^k$,
so the number of unique labels is at least:
$(3/2-e/2) \cdot d^{k\cdot h} \cdot h^k > 0.1 \cdot d^{k\cdot h} \cdot  h^k$.
Setting $n=d^{k\cdot h}$ this becomes $0.1 \cdot  h^k \cdot n$,
making the number of required bits at least:
\[
\log n + k \log h - \Oh(1) = \log n + k \log \frac{\log n}{k\log d}-\Oh(1) = \log n + \Omega(k \cdot \log \frac{\log n}{k \log k}).
\]
Note that for the above calculation to make sense, we need that $d^{k} \leq n$.

\subsection{Lower Bound for Large \boldmath$k$}
\label{sec:largek}

The lower bound from~\cref{sec:smallk} is not meaningful for large values of $k\in [\log n,n]$.
In this section we show that the lower bound of Gavoille et al.~\cite{gavoille2004distance} for general distance queries, can be translated into a lower bound
of $\Omega(\log n\cdot \log(k/\log n) )$ for $k$-distance queries.

The lower bound uses the family of $\htree$s (see~\cref{sec:prelim}). Recall that every edge of an $\htree$ has a weight from
$[0,M]$. It is easy to verify that the number of nodes in such a tree is $3\cdot 2^h - 2$,
hence the distance between any two leaves is no more than $2hM$.

If $M\leq k/(2h)$ then, because the distance between any two leaves
in the tree is at most $2hM\leq k$, any labeling of the leaves for $k$-distance can be used for general distance labeling. By~\cref{lem:tree_lower_bound}, such a labeling scheme would require labels of at least $h/2 \cdot \log M$-bits.
We set $h=\log \sqrt{n/3}$ and $M = \min\set{k/2h,2^h}$. Then, by subdividing the edges
of an $\htree$ we obtain an unweighted tree on at most $n$ nodes. Labeling the leaves
of such a tree for $k$-distance can be used for general distance labeling of the $\htree$,
so we obtain the following lower bounds:
\begin{enumerate}[label=(\arabic*)]
\item
if $\frac{k}{2h} \leq 2^h$, the number of required bits is $\frac{h}{2}\cdot \log\frac{k}{2h} = \Omega(\log n\cdot \log\frac{k}{\log n})$;
\item
if $\frac{k}{2h} > 2^h$, the number of required bits is $\frac{h}{2}\cdot h = \Omega(\log^2 n)$,
so $\Omega(\log n\cdot \log\frac{k}{\log n})$ for $k\leq n$.
\end{enumerate}

\subsection{Upper Bound}

In this section we present our improved upper bound for $k$-distance labeling. We build upon the
ideas of Alstrup, Bille, and Rauhe~\cite{alstrup2005labeling}, who presented an
$\log n+\Oh(k^2\log(k\log n))$ bits labeling scheme. As a preliminary step, we will show
an $\Oh(\log k\cdot \log n)$ bits scheme for $k \geq \log n$, and then move to the more complicated
$\log n + \Oh(k\log\frac{\log n}{k})$ bits scheme for $k < \log n$. 

Consider the heavy path decomposition of $T$. We define the \emph{light range} of $u$, denoted $\lightrange{u}$, to contain the preorder number of all nodes in $T_u$ if $u$ has no heavy child, and all nodes
in $T_u\setminus T_{\heavy(u)}$ otherwise. 
We say that $v$ is a {\em significant ancestor} of $u$ if $\pre(u) \in \lightrange{v}$. For example, in~\cref{fig:heavy-path-decomp} $v$ is a significant ancestor of $u$ since the light range of $v$ is $\lightrange{v} = [5,23)$.
 The number of significant ancestors of $u$ is equal to $\lightdepth(u)=\Oh(\log n)$.
The nearest common significant ancestor of $u$ and $v$, denoted $\NCSA(u,v)$, is
$w$ such that $\pre(w)$ is as large as possible and $w$ is a significant ancestor of both $u$ and $v$.
In other words, $w$ is the first significant ancestor on the path from $u$ to the root, which is also
a significant ancestor of $v$. The heavy path $P$ such that $\hphead(P)$ is a child of
$\NCSA(u,v)$ is called the nearest common heavy path of $u$ and $v$ and denoted
$\NCH(u,v)$. When there is no common significant ancestor for $u$ and $v$ we set $\NCSA(u,v)$ to $\nil$ and $\NCH(u,v)$ to be the heavy path starting at the root. 

Let the significant ancestors of $u$ and $v$ on $\NCH(u,v)$ be $u'$ and $v'$,
respectively. Then $\distance(u,v)=\distance(u,u')+\distance(u',v')+\distance(v,v')$.
Computing $\distance(u,v)$ consists of two steps:
\begin{enumerate}
\item identifying $\NCH(u,v)$, $u'$ and $v'$, and computing $\distance(u,u')$ and $\distance(v,v')$,
\item computing $\distance(u',v')$.
\end{enumerate}
We describe these steps separately, and then describe how to implement them in constant time.

\paragraph{Identifying \boldmath$\NCH(u,v)$.} 

For an integer range $A = [a,b]\subset [1,n]$ we define its identifier $\id(A)$ by considering a binary trie representing all
words of length $\ceil{\log n}$. The label of a node $u$ in the trie is the concatenation of the labels of the edges
on the path from the root to $u$. Every integer $x\in [1,n]$ corresponds to a leaf $u$ in the trie, such that
the label of $u$ is the binary expansion of $x$. Then, $\NCA(a,b)$ is the nearest
common ancestor of the leaves corresponding to $a$ and $b$ in the trie, $\height(A)$ is the height of the subtree rooted
at $\NCA(a,b)$, and finally $\id(A)$ is the label of $\NCA(a,b)$.

\begin{observation}
\label{ob:range_id}
For any range $A$:
\begin{enumerate}
	\item \label{ob:range_id:compute}
	$\id(A)$ can be computed given $\height(A)$ and any $x\in A$, 	
	\item \label{ob:range_id:disjoint}
	$A \cap B = \emptyset \implies \id(B) \neq \id(A)$.
\end{enumerate}
\end{observation}

 Alstrup, Bille, and Rauhe~\cite{alstrup2005labeling} use the notion of
\emph{significant preorder numbers}. We replace it with our notion of range identifier,
that has very similar properties, yet is somewhat easier to operate on (and hence we are able to achieve
much better query time). For any node $u\in T$, let $\id(u)=(\id(\lightrange{u}), \lightdepth(u))$.	

\begin{lemma}
\label{lem:uniqueid}
For any nodes $u,v\in T$, if $u\neq v$ then $\id(u)\neq \id(v)$.
\end{lemma}

\begin{proof}	
If $\lightdepth(u) \neq \lightdepth(v)$ then we are done.
Otherwise, $\lightrange{v}$ and $\lightrange{u}$ are disjoint, so by \cref{ob:range_id}.\ref{ob:range_id:disjoint} $\id(\lightrange{v}) \neq \id(\lightrange{u})$ and we are also done.
\end{proof}

Consider a node $u\in T$ and let $u=u_0,u_1,u_2,\ldots$ be all of its significant ancestors in the order
in which they appear on the path from $u$ to $\treeroot(T)$. Let $u_{r}$ be the last of these
ancestors such that $\distance(u,u_{r}) \leq k$.
We call $u_{r}$ the top significant ancestor of $u$. The label of $u$ consists of $\pre(u)$, $\lightdepth(u)$,
and an encoding of $\height(\lightrange{u_i})$ for every $i=0,1,\ldots,r$. By~\cref{ob:range_id}.\ref{ob:range_id:compute}
this is enough to compute $\id(u_i)$ for every $i=0,1,\ldots,r$. Consequently, given the
labels of $u$ and $v$, we can either detect that the distance from $u$ or $v$ to
$\NCA(u,v)$ exceeds $k$, or calculate $\lightdepth(\NCSA(u,v))$.

To encode $\height(\lightrange{v_i})$ for every $i=0,1,\ldots,r$, we observe that
$\lightrange{v_i}\subseteq \lightrange{v_{i+1}}$ and that $r\leq \min\set{\log n,k}$. Hence, we need to encode
a non-decreasing sequence of $\min\set{\log n,k}$ numbers from $[0,\log n]$. 
By~\cref{lem:encodingsequence}, for $k < \log n$ this can be done using
$\Oh(k\log\frac{\log n}{k})$ bits and for $k \geq \log n$
using $\Oh(\log n)$ bits, and allows us to calculate $\lightdepth(\NCSA(u,v))$ or detect
that $\distance(u,v)>k$.

We encode in the label of $u$ the distance from $u$ to $u_i$ for every
$i=0,1,\ldots,r-1$. Because $0=d(u,u_0)<d(u,u_1)<\cdots < d(u,u_{r-1})\leq k$
we need to encode an increasing sequence of $\min\set{\log n,k}$ numbers from the range
$[0,k]$. By~\cref{lem:encodingsequence},
if $k<\log n$ this can be done using $\Oh(k)$ bits and if $k\geq \log n$
using $\Oh(\log n\cdot\log \frac{k}{\log n})$ bits. Then, after having found $\lightdepth(\NCSA(u,v))$
we can compute $\distance(u,u')$ and $\distance(v,v')$.
 
\begin{figure}[h]
\begin{center}
\includegraphics[scale=0.6]{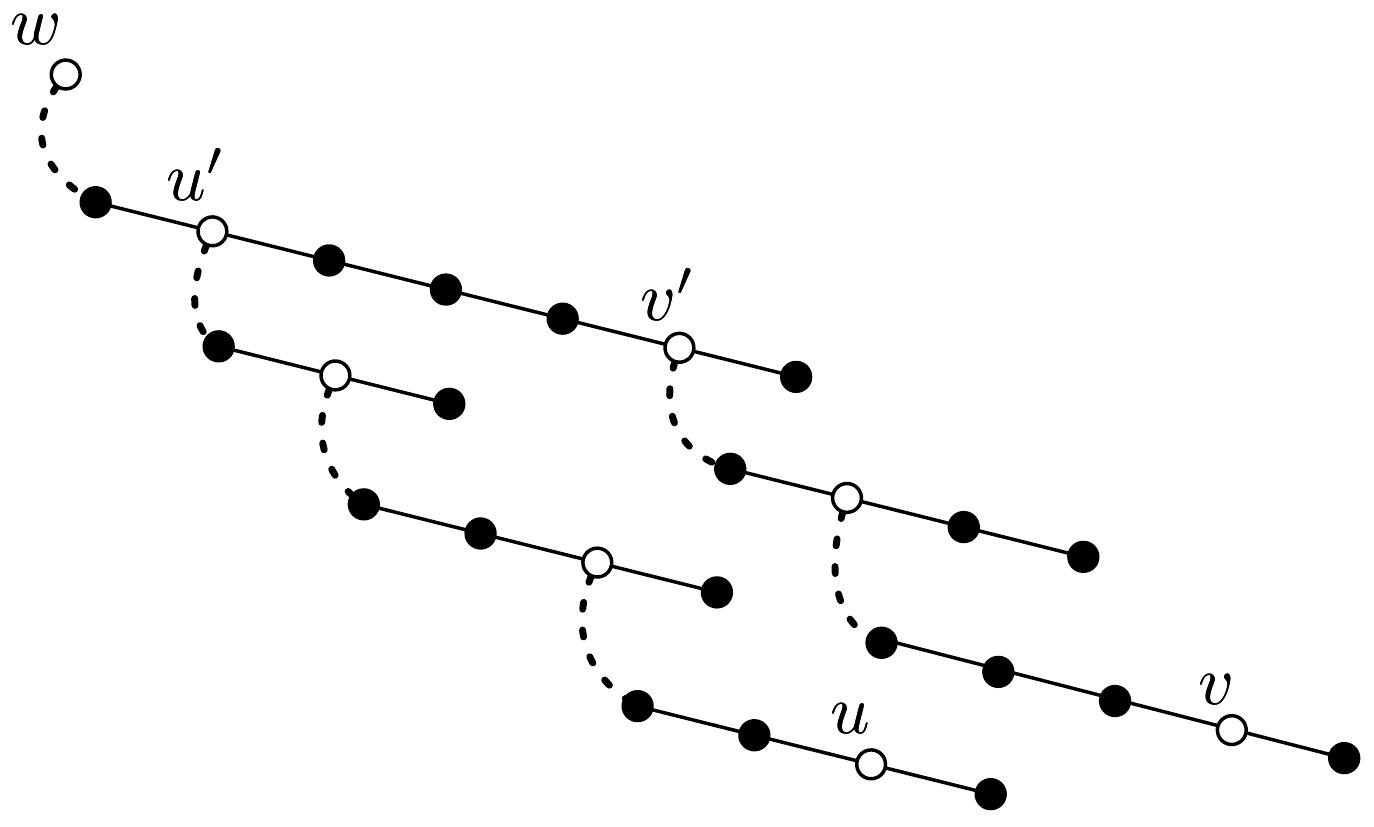}
\caption{$w=\NCSA(u,v)$, $u'$ and $v'$ are the significant ancestors of $u$ and $v$ on $\NCH(u,v)$,
respectively. Significant ancestors are white,  heavy
edges are solid, and light edges are dashed.}
\label{fig:ncsa}
\end{center}	
\end{figure}

\paragraph{Computing \boldmath$\distance(u',v')$.}

Recall that $u'$ and $v'$ are the significants ancestors on the $\NCH(u,v)$ of $u$ and $v$,
respectively. We want to compute $\distance(u',v')$. If $u'$ is not the top significant ancestor of
$u$ and $v'$ is not the top significant ancestor of $v'$ then from the distances encoded in the
labels of $u$ and $v$ we can retrieve $\distance(u',\NCSA(u,v))$ and $\distance(v',\NCSA(u,v))$,
and  return their absolute difference as $\distance(u',v')$.
Now consider the case that $u'$ is the top significant ancestor of $u$, but $v'$ is not the top significant
ancestor of $v$. To deal with this case, the label of $u$ should also encode
the distance $\alpha$ from
$u'$ to the head of it's heavy path. This distance might be very large (even up to $n$),
so we cap it at $2k+1$ to use only $\Oh(\log k)$ bits. Since 
$v'$ is not the top significant ancestor of $v$, we can retrieve $\beta=\distance(v',\NCSA(u,v))$ as in the previous case.
We know that $\beta \leq k$ because otherwise $v'$ would be the top significant ancestor
of $v$. Recall that $\distance(u',v')$ is equal to the absolute difference between 
$\distance(u',\NCSA(u,v))$ and $\distance(v',\NCSA(u,v))$.
If $\alpha=2k+1$ then this value must
exceed $k$, so we terminate. Otherwise, we return $|\alpha-\beta|$.

The remaining and most complicated case is when $u'$ is the top significant ancestor of $u$
and $v'$ is the top significant ancestor of $v'$. If $k > \log n$, the solution is simple,
as we can afford to store the distance from the top significant ancestor to the head of its
heavy path for every node (i.e., $\distance(u,u_{r+1})$) using $\Oh(\log n)$ bits.
The rest of this section is dedicated for solving $k\leq \log n$.

To make the further exposition more concise, we define the 2-approximation of an integer $x$, denoted $\appx{x}$,
as the largest power of 2 not exceeding $x$. That is, $\appx{x} = 2^{\floor{\log x}}$. Clearly, 
2-approximation is monotone, meaning that $x\leq y$ implies $\appx{x}\leq \appx{y}$,
and furthermore $\appx{x} < \appx{2x}$.

\begin{lemma}
\label{lem:ranges}
Let $A,B,C$ be three open intervals such that $A\cap B=\emptyset$ and $A,B\subseteq C$.
Then $\appx{|C|}\neq \appx{|A|}$ or $\appx{|C|}\neq \appx{|B|}$.
\end{lemma}

\begin{proof}
Assume that $|B| \leq |A|$. Then $2|B| \leq |A|+|B| \leq |C|$ and by the properties of
2-approximation $\appx{|B|} < \appx{|C|}$, so indeed $\appx{|B|}\neq \appx{|C|}$.
Symmetrically, if $|A| \leq |B|$ then  $\appx{|A|}\neq \appx{|C|}$.
\end{proof}

The following lemma captures the essence of the $k$-distance scheme of Alstrup, Bille, and
Rauhe~\cite{alstrup2005labeling}, while being optimized so that we can obtain our improvement.

\begin{lemma} \label{lem:monotone_sequence_scheme}
Consider an increasing sequence of integers $a_1 < a_2 < \ldots < a_s$. Given $a_i < a_j$,
$i' = i\bmod{k}$, $j' = j\mod{k}$, and $\appx{a_{i+t} - a_i}$ and $\appx{a_j - a_{j-t}}$ for every
$t=1,2,\ldots,k$ we can calculate $j-i$ or determine that $j-i > k$ in constant time.
\end{lemma}

\begin{proof}
We start by setting $t = j'-i'$.
Now either $t = j - i$ or $j - i \geq k+t$. Hence we only need to distinguish
between these two cases.

Consider three intervals $(a_i, a_{i+t})$, $(a_{j-t}, a_j)$ and $(a_i, a_j)$. If $t=j-i$
then these three intervals are equal and so are $\appx{a_{i+t}-a_i}, \appx{a_j-a_{j-t}}$
and $\appx{a_j-a_i}$.
Otherwise $j-i \geq k+t >  2t$, so $(a_i, a_{i+t})$ and $(a_{j-t},a_j)$ 
are two disjoint intervals contained
in $(a_i, a_j)$. Hence by~\cref{lem:ranges} either $\appx{a_{i+t}-a_i}\neq \appx{a_j-a_i}$
or $\appx{a_j-a_{j-t}}\neq \appx{a_j-a_i}$.
Therefore after retrieving $\appx{a_{i+t}-a_i}$ and $\appx{a_j-a_{j-t}}$ and calculating
$\appx{a_j-a_i}$ we can distinguish between the two cases and either return $j-i$
or report that $j-i > k$. Notice that $\appx{a_j-a_i}$ can be calculated in constant
time using standard word-RAM operations.	
\end{proof}

We need to show that, for every heavy path, we store enough  information  for applying~\cref{lem:monotone_sequence_scheme}.
Consider a heavy path $u_1 - u_2 - \ldots - u_s$, where $u_1$ is the head.
By the properties of the heavy path decomposition,
$\id(\lightrange{u_1})<\id(\lightrange{u_2})<\ldots<\id(\lightrange{u_s})$.
The label of every $u\in T$ such that $u_i$ is the top significant ancestor of $u$ encodes the following:
\begin{enumerate}
\item 
$\id(\lightrange{u_i})$;
\item 
$\appx{\id(\lightrange{u_{i+t}}) - \id(\lightrange{u_i})}$ and $\appx{\id(\lightrange{u_i}) - \id(\lightrange{u_{i-t}})}$ for every $t=1,2,\ldots,k$;
\item 
$i\bmod{k}$.
\end{enumerate}
To encode $\id(\lightrange{u_i})$, we store $\height(\lightrange{u_i})$ using $\Oh(\log\log n)$ bits.
Encoding $\appx{\id(\lightrange{u_{i+t}}) - \id(\lightrange{u_i})}$ and
$\appx{\id(\lightrange{u_i}) - \id(\lightrange{u_{i-t}})}$ for every $t=1,2,\ldots,k$ reduces
to encoding two non-decreasing sequences of $k$ integers from $[0,\log n]$.
By~\cref{lem:encodingsequence}, such a sequence can be stored
using $\Oh(k\log\frac{\log n}{k})$-bits.
Finally, $i\bmod{k}$ is encoded using $\Oh(\log k)$ bits.
Notice that both $\Oh(\log\log n)$ and $\Oh(\log k)$ are absorbed by $\Oh(k\log\frac{\log n}{k})$.

To conclude, given the labels of $u$ and $v$, whose significant
ancestors $u'$ and $v'$ are on $\NCH(u,v)=u_1-u_2-\ldots -u_s$ and are both the top significant ancestors,
we can now calculate $\distance(u',v')$ or detect that it exceeds $k$
by retrieving the necessary information from the labels of $u$ and $v$ and then
applying~\cref{lem:monotone_sequence_scheme}.  
Finally, in the following section (\ref{sec:query-time}) we show that queries can be supported in constant time.
The gist of the improvement in the query time is that $\id(\lightrange{u_i})$ can be obtained
from $\pre(u)$ by truncating the last $\height(\lightrange{u_i})$ trailing bits and setting
the $\height(\lightrange{u_i})^\text{th}$ bit to $1$.

\subsection{Query Time Analysis}
\label{sec:query-time}

We now show how to implement the query in constant time. The main difficulty is in
determining $\lightdepth(\NCSA(u,v))$ efficiently. Once it is known, from $\lightdepth(u)$
and the encoding of the distances from $u$ to its significant ancestors implemented
with~\cref{lem:encodingsequence} we obtain
$\distance(u,u')$ in constant time, and similarly for
$\distance(v,v')$ (or conclude that $\distance(u,v)$ exceeds $k$). Calculating $\distance(u',v')$
requires invoking~\cref{lem:monotone_sequence_scheme} while providing
access to the stored non-decreasing sequences of 2-approximations
with~\cref{lem:encodingsequence},
so also takes only constant time.

Recall that the label of $u$ contains $\pre(u)$, $\lightdepth(u)$, and an encoding of the sequence
$\height(\lightrange{u_0}) \leq \height(\lightrange{u_1}) \leq \dots \leq \height(\lightrange{u_r})$ implemented with~\cref{lem:encodingsequence}.
Similarly, the label of $v$ contains $\pre(v)$,
$\lightdepth(v)$, and $\height(\lightrange{v_0}) \leq \height(\lightrange{v_1}) \leq \dots \leq \height(\lightrange{v_s})$.
We want to calculate $\lightdepth(\NCSA(u,v))$. For now, we assume that $r=s$ and
$\lightdepth(u_i)=\lightdepth(v_i)$ for every $i=0,1,\ldots,r$.
Then, calculating $\lightdepth(\NCSA(u,v))$ reduces to finding the smallest $i$
such that $u_i=v_i$. Notice that then $u_j=v_j$ for every $j=i,i+1,\ldots,r$.
If $u_j=v_j$ then clearly $\height(\lightrange{u_j})=\height(\lightrange{v_j})$, so we start
with locating the smallest $i'$ such that $\height(\lightrange{u_j})=\height(\lightrange{v_j})$
for every $j=i',i'+1,\ldots,r$. This can be done in constant time by computing the
longest common suffix of both sequences.

Because $\lightdepth(u_j)=\lightdepth(v_j)$ for every $j=0,1,\ldots,r$,
it remains to find the smallest $i \geq i'$ such that
$\id(\lightrange{u_j})=\id(\lightrange{v_j})$ for every $j=i,i+1,\ldots,r$.
Observe that $\id(\lightrange{u_j})$ is obtained by clearing all
$\height(\lightrange{u_j})$ least significant bits of $\pre(u)$ and, if $\height(\lightrange{u_j})>0$, setting the
$\height(\lightrange{u_j})^\text{th}$ bit to 1, and similarly for $\id(\lightrange{v_j})$.
Without loss of generality, assume that $\height(\lightrange{u_{i'}})=\height(\lightrange{v_{i'}})>0$
(if not, $i=i'$ is checked separately in constant time). We find the longest
common prefix of the binary expansions of $\pre(u)$ and $\pre(v)$, i.e., the smallest
$\ell \geq 0$ such that their binary expansions
are the same after truncating the $\ell$ least significant bits.
$\ell$ can be found in constant time using standard word-RAM operations
$\text{MSB}(\pre(u) \text{ XOR } \pre(v))$.
Then, for $\id(\lightrange{u_i})=\id(\lightrange{v_i})$ to hold, we need to clear
at least $\ell$ least significant bits of $\pre(u)$ and $\pre(v)$.
Hence it remains to find the smallest $i \geq i'$ such that
$\height(\lightrange{u_i})=\height(\lightrange{v_i}) \geq i$. Such an $i$ can be found in constant time
with a successor query on the encoded sequence.

If $r\neq s$ or $\lightdepth(u_0)\neq \lightdepth(v_0)$, then essentially
the same argument works, except that we need to compute the longest common
prefix of suffixes instead of whole sequences.

\section{Approximate Distance Labeling}
\label{sec:approx}

In this section we prove~\cref{thm:approximate}. Recall that in  $(1+\eps)$-approximate distance labeling,  given the labels of $u$ and $v$ we need to output some value in the interval $[\distance(u,v),(1+\eps)\cdot\distance(u,v)]$. 

\subsection{Lower bound}

To show the lower bound we modify the family of $(h,M)$-trees such that exact distances between leaves can be inferred from their approximate distances. Thereafter, we can invoke~\cref{lem:tree_lower_bound} to establish the lower bound.

An $(h,M)$-tree is modified by first subdividing its edges to obtain an unweighted tree of height $h\cdot M$. The edges of this unweighted tree are then further subdivided: every edge of depth $d\ge 0$ is subdivided into $\floor{(1 + \eps)^{hM-d}}$ edges. Note that in the original $(h,M$)-tree all leaves are at the same distance from the root. Therefore, if the distance between two leaves is $2k$ in the original tree, it is $f(k)=2\sum_{i=1}^{k}{\floor{(1+\eps)^i}}$ in the final tree.
A $(1+\eps)$-approximation of this distance belongs to the interval $[f(k), (1+\eps)f(k)]$. 
We next show that these intervals are disjoint, so in fact a $(1+\eps)$-approximation 
of $f(k)$ is enough to infer the original distance, $2k$.

Observe that $f(k)$ is monotone, so to prove that the intervals $[f(k), (1+\eps)f(k)]$ are
disjoint, it is enough to show that $(1+\eps)f(k) < f(k+1)$, or:
\begin{align*}
(1+\eps)\sum_{i=1}^{k}{\floor{(1+\eps)^i}} &< \sum_{i=1}^{k+1}{\floor{(1+\eps)^i}}, \text{ or equivalently} \\
\eps\sum_{i=1}^{k}{\floor{(1+\eps)^i}} &< \floor{(1+\eps)^{k+1}}. \\
\end{align*}
Since $\floor{(1+\eps)^i} < (1+\eps)^i$, it is enough to show that:
\begin{align*}
\eps\sum_{i=1}^{k}{(1+\eps)^i} &< \floor{(1+\eps)^{k+1}}, \text{ or equivalently}\\	
(1+\eps)^{k+1} - (1+\eps) &< \floor{(1+\eps)^{k+1}}.\\	
\end{align*}
Since $x - 1 < \floor{x}$ is always true, we conclude that the intervals are indeed disjoint.
Hence, by~\cref{lem:tree_lower_bound} we obtain that labeling the leaves of the final tree
for $(1+\eps)$-approximate distances requires $h/2 \cdot \log M$ bits. It remains to choose $h$ and $M$ and 
rephrase this bound in terms of the size of the final tree. The size of the final tree is at most
\begin{align*}
2\sum_{i=0}^{h-1}{2^{h-1-i}} \sum_{j=M\cdot i+1}^{M\cdot(i+1)}\floor{(1 + \eps)^j} & =
\sum_{i=0}^{h-1}2^{h-i} \sum_{j=M\cdot i+1}^{M\cdot(i+1)} (1 + \eps)^j \\
&\leq 2^{h}\sum_{i=0}^{h-1}{2^{-i}}(1+\eps)^{M\cdot i+1}\frac{(1+\eps)^{M}-1}{(1+\eps)-1}\\
&\leq 2^h \frac{1}{\eps}\sum_{i=0}^{h-1}{2^{-i}}(1+\eps)^{M(i+1)+1} \\
&\leq 2^h \frac{1}{\eps}(1+\eps)^{M+1}\sum_{i=0}^{h-1}\left(\frac{(1+\eps)^{M}}{2}\right)^{i} \\
&= 2^h \frac{1}{\eps}(1+\eps)^{M+1} \frac{(\frac{(1+\eps)^M}{2})^h-1}{\frac{(1+\eps)^M}{2}-1}\\
&\leq \frac{2}{\eps}\frac{(1+\eps)^{M+1}}{(1+\eps)^M-2} (1+\eps)^{M\cdot h} \\
\end{align*}
We set $M=2/\eps$. Then, because $\eps \leq 1$ and $(1+\eps)^M \geq 4$, the size
is at most:
\begin{align*}
&\leq 2\frac{1+\eps}{\eps} \frac{(1+\eps)^M}{(1+\eps)^M-2} e^{2h} \\
&\leq \frac{8}{\eps} e^{2h}.
\end{align*}
We set $h = \log (\eps \cdot n/8) / (2 \log e) = \Theta(\log(\eps \cdot n))$, and obtain that labeling trees of size $n$ for $(1+\eps)$-approximate distances
requires $\Omega(\log(1/\eps)\cdot \log (\eps\cdot n))$ bits. Now,
if $\eps > 1 / \sqrt{n}$ this is in fact $\Omega(\log(1/\eps)\cdot \log n)$ and
we are done. Otherwise ($\eps \leq 1/\sqrt{n}$), we observe that
a scheme with such small $\eps$ can be used for labeling a tree of size $\sqrt{n}$
for exact distances (by subdividing every edge into $\sqrt{n}$ edges).
Such labeling requires $\Omega(\log^2(\sqrt{n}))=\Omega(\log^2n)$ bits,
which for $\eps \geq 1/n$ is also $\Omega(\log(1/\eps)\cdot \log n)$ as required.

\subsection{Upper bound}

We now describe a matching upper bound: a $(1+\eps)$-approximate distance labeling scheme with label size $\Oh(\log(1/\eps)\cdot\log n)$. Our scheme is based on the scheme of Alstrup et al.~\cite{alstrup2015distance}  whose label size is $\Oh(1/\eps\cdot\log n)$. For any node $v$, let $v_1, \dots, v_k$ be the significant ancestors of $v$ in the order they appear on the $v$-to-root path. 
Let $\ceil{x}_{1+\eps}$ denote the smallest power of $1+\eps$ larger than $x$. Observe that $\ceil{x}_{1+\eps}$ is a $(1+\eps)$-approximation of $x$. 

The label of a node $v$ in~\cite{alstrup2015distance} is composed of the following fields:
\begin{enumerate}
\item \label{list:root distance}
$d(v,\treeroot(T))$,
\item \label{list:nca label}
the $\Oh(\log n)$ label generated by~\cref{lem:NCA scheme} applied on $v$,
\item \label{list:significant ancestor sequence}
the sequence $\ceil{\distance(v,v_1)}_{1+\eps}, \ceil{\distance(v,v_2)}_{1+\eps}, \dots, \ceil{\distance(v,v_k)}_{1+\eps}.$
\end{enumerate}
\noindent Let $w = \NCA(u,v)$. If $w = v$ or $w=u$, we can extract the exact distance from~(\ref{list:root distance}). Otherwise, w.l.o.g. we can find the significant ancestor $v_j$ of $v$ such that $v_j=w$ using (\ref{list:nca label}), and then find  $\ceil{\distance(v,w)}_{1+\eps}$ using (\ref{list:significant ancestor sequence}). Alstrup et al. show that:
\begin{align*}
 \distance(u,v) \leq \distance(u,\treeroot(T)) - \distance(v,\treeroot(T)) + 2\cdot\ceil{\distance(v,w)}_{1+\eps} \leq 
	(1 + 2\eps)\cdot \distance(u,v).
\end{align*}

	This means we can compute a $(1 + \eps)$-approximation of $\distance(u,v)$ by replacing $\eps$ with $\eps/2$.
The bottleneck for the size of the label is storing the sequence in (\ref{list:significant ancestor sequence}).
In~\cite{alstrup2015distance}, this sequence is stored using a unary encoding of the sequence $\ceil{\distance(v,v_1)}_{1+\eps}, \ceil{\distance(v,v_2)}_{1+\eps}-\ceil{\distance(v,v_1)}_{1+\eps}, \dots, \ceil{\distance(v,v_k)}_{1+\eps}-\ceil{\distance(v,v_{k-1})}_{1+\eps}$ delimited by a single bit between two consecutive values. The maximal length of the path is at most $n$, so such an encoding will require $\log_{1+\eps}{n}$ bits and additional $k\leq\log n$ bits for the delimiters. This means that the final label size is $\Theta(\log_{1+\eps}{n})$, or $\Theta(1/\eps\cdot\log n)$ for small $\eps$. Instead, we store the sequence using \cref{lem:encodingsequence},
which yields a label of size $\Oh(\log(1/\eps)\cdot\log{n})$ bits and a constant query time.

\bibliographystyle{abbrv}
\bibliography{main}

\end{document}